\providecommand{\tabularnewline}{\\}
\theoremstyle{plain}
\newtheorem{thm}{\protect\theoremname}
\theoremstyle{plain}
\newtheorem{cor}{\protect\corollaryname}
\theoremstyle{plain}
\newtheorem{lem}{\protect\lemmaname}
\providecommand{\corollaryname}{Corollary}
\providecommand{\lemmaname}{Lemma}
\providecommand{\theoremname}{Theorem}
\begin{document}

\title{Practical Modeling and Analysis of \\
Blockchain Radio Access Network}

\author{Xintong~Ling,~\IEEEmembership{Member,~IEEE}, Yuwei Le,~\IEEEmembership{Student
Member,~IEEE},\\
 Jiaheng~Wang,~\IEEEmembership{Senior Member,~IEEE}, Zhi~Ding,~\IEEEmembership{Fellow,~IEEE},
Xiqi~Gao,~\IEEEmembership{Fellow,~IEEE}\thanks{X. Ling, Y. Le, J. Wang, and X. Gao are with the National Mobile Communications
Research Laboratory, Southeast University, Nanjing, China (e-mail:
xtling@seu.edu.cn, ywle@seu.edu.cn, jhwang@seu.edu.cn, and xqgao@seu.edu.cn).
Z. Ding is with Department of Electrical and Computer Engineering,
University of California, Davis, California, 95616 (e-mail: zding@ucdavis.edu).}\vspace{-0.2cm}}
\maketitle
\begin{abstract}
The continually rising demand for wireless services and applications
in the era of Internet of things (IoT) and artificial intelligence
(AI) presents a significant number of unprecedented challenges to
existing network structures. To meet the rapid growth need of mobile
data services, blockchain radio access network (B-RAN) has emerged
as a decentralized, trustworthy radio access paradigm spurred by blockchain
technologies. However, many characteristics of B-RAN remain unclear
and hard to characterize. In this study, we develop an analytical
framework to model B-RAN and provide some basic fundamental analysis.
Starting from block generation, we establish a queuing model based
on a time-homogeneous Markov chain. From the queuing model, we evaluate
the performance of B-RAN with respect to latency and security considerations.
By connecting latency and security, we uncover a more comprehensive
picture of the achievable performance of B-RAN. Further, we present
experimental results via an innovative prototype and validate the
proposed model.
\end{abstract}

\section{Introduction}

The past decade has witnessed tremendous growth in emerging wireless
technologies geared toward diverse applications\cite{Cisco2019}.
Radio access networks (RANs) are becoming more heterogeneous and highly
complex. Without well-designed inter-operation, mobile network operators
(MNOs) must rely on their independent infrastructures and spectra
to deliver data, often leading to duplication, redundancy, and inefficiency.
A huge number of currently deployed business or individual access
points (APs) have not been coordinated in the existing architecture
of RANs, and are therefore under-utilized. Meanwhile, user equipments
(UEs) are not granted to access to APs of operators other than their
own, even though some of them may provide better link quality and
economically sensible. The present state of rising traffic demands
coupled with the under-utilization of existing spectra and infrastructure
resources motivates the development of a novel network architecture
to integrate multiple parties of service providers (SPs) and clients
to transform the rigid network access paradigm that we face today.

Recently, blockchain has been recognized as a disruptive innovation
shockwave \cite{Tschorsch2016,Xie2019,Dai2019}. Federal Communications
Commission (FCC) has been suggested that blockchain may be integrated
into wireless communications for the next-generation network (NGN)
in the Mobile World Congress 2018. Along the same line, the new concept
of blockchain radio access network (B-RAN) was formally proposed and
defined in \cite{Ling2019}. In a nutshell, B-RAN is a decentralized
and secure wireless access paradigm that leverages the principle of
blockchain to multiple trustless networks into a larger shared network
and benefits multiple parties from positive network effects. As revealed
in \cite{Ling2019}, B-RAN can drastically improve network throughput
via cross-network sharing and offloading. Furthermore, the positive
network effect can help B-RAN recruit and attract more players, including
network operators, spectral owners, infrastructure manufacturers,
and service clients alike. The subsequent expansion of such a shared
network platform would make the network platform more valuable, thereby
generating a positive feedback loop. In time, a vast number of individual
APs can be organized into B-RAN and commodified to form a sizable
and ubiquitous wireless network, which can significantly improve the
utility of spectra and infrastructures. In practice, rights, responsibilities
and obligations of each participant in B-RAN can be flexibly codified
as smart contracts executed by blockchain.

Among existing studies on leveraging blockchain in networks, most
have focused on Internet of things (IoT)\cite{Christidis2016,Novo2018,Danzi2018,Ling2020,Backman2017},
cloud/edge computing\cite{Xiong2019,Yang2019,Liu2018}, wireless sensor
networks \cite{Yang2019a}, and other consensus mechanisms\cite{Liu2019,Wang2019}.
Only a few considered the future integration of blockchain in wireless
communications. Weiss \emph{et al.} \cite{Weiss2019} discussed several
potentials of blockchain in spectrum management. Kuo \emph{et al.}
\cite{Kuo2018}\emph{ }summarized some critical issues when applying
blockchain to wireless networks and pointed out the versatility of
blockchain. Pascale \emph{et al. }\cite{Pascale2017} adopted smart
contracts as an enabler to achieve service level agreement (SLA) for
access. Kotobi \emph{et al.} \cite{Kotobi2018} proposed a secure
blockchain verification protocol associated with virtual currency
to enable spectrum sharing.  Le \emph{et al. }\cite{Le2019} developed
an early prototype to demonstrate the functionality of B-RAN.

Despite the growing number of papers and heightened interests with
respect to blockchain, works including fundamental analysis are rather
limited. A number of critical difficulties remain unsolved. 1) Existing
works have not assessed the impact of decentralization on RANs after
introducing blockchain. Decentralization always comes with a price
that should be characterized and quantified. 2) Very few papers have
noticed that service latency will be a crucial debacle for B-RAN as
a price of decentralization\cite{Pascale2017}. Unfortunately, the
length of such delay and its controllability are still open issues.
3) Security is yet another critical aspect of blockchain-based protocols.
In particular, alternative history attack, as an inherent risk of
decentralized databases, is always possible, and must be assessed.
4) A proper model is urgently needed to exploit the characteristics
of B-RAN (such as latency and security) and to further provide insights
and guidelines for real-world implementations.

\begin{table}
\caption{Important variables in the modeling and analysis\label{table1}}
\begin{raggedright} \centering%
\begin{tabular}{|c|c|c|c|}
\hline
Symbols & Explanations & Symbols & Explanations\tabularnewline
\hline
\hline
$t_{k}^{a}$ & Arrival epoch of request $k$ & $\tau_{k}^{c}$ & Required service time of request $k$\tabularnewline
\hline
$\lambda^{a}$ & Requests arrival rate & $T^{a}=1/\lambda^{a}$ & Average inter-arrival time\tabularnewline
\hline
$\lambda^{b}$ & Blocks generation rate & $T^{b}=1/\lambda^{b}$ & Average block time\tabularnewline
\hline
$\lambda^{c}$ & Service rate & $T^{c}=1/\lambda^{c}$ & Average service time\tabularnewline
\hline
$s$ & Number of access links in the tract & $N$ & Number of required confirmations\tabularnewline
\hline
$\rho=\frac{\lambda^{a}}{s\lambda^{c}}$ & Traffic intensity & $\beta$ & Relative mining rate of an attacker\tabularnewline
\hline
$\Phi=\left\{ \lambda^{a},\lambda^{b},\lambda^{c},s\right\} $  & Basic configuration of B-RAN &  & \tabularnewline
\hline
\end{tabular}\end{raggedright} \centering{}\vspace{-0.6cm}
\end{table}
To address the aforementioned open issues, this study aims to establish
a framework to concretely model and evaluate B-RAN. We start from
the block generation process and develop an analytical model to characterize
B-RAN behaviors. We shall evaluate the performance in terms of latency
and security in order to present a more comprehensive view of B-RAN.
We will verify the efficacy of our model by building an innovative
B-RAN prototype. We summarize our key contributions as follows:
\begin{itemize}
\item We clearly define the workflow of B-RAN and develop an original queuing
model based on a time-homogeneous Markov chain. To the best of our
knowledge, this is the first known analytical model for B-RAN.
\item From the queuing model, we analytically characterize the system latency
of B-RAN and further derive both upper and lower bounds on B-RAN latency.
\item We use the probability of successful attack to define the security
level of B-RAN and evaluate potential factors that influence the security.
We further assess the risk by taking the attacker's strategy into
consideration.
\item Based on the modeling and analysis, we uncover an inherent trade-off
relationship between security and latency, and develop an in-depth
understanding regarding the achievable performance of B-RAN.
\item Finally, we build a B-RAN prototype that can be used comprehensive
experiments to validate the accuracy of our analytical model and results.
\end{itemize}
We organize this manuscript as follows. Section II presents the B-RAN
framework and the prototype. Section III provides the mining model
to describe the block generation process. In Section IV, we establish
the B-RAN queuing model, with which we analyze and evaluate the B-RAN
performance concerning latency and security in Section V and Section
VI, respectively. We demonstrate the latency-security trade-off in
Section VII and provide some in-depth insights to B-RAN. Section VIII
concludes our manuscript. Given the large number of symbols to be
used, we summarize the important variables in Table \ref{table1}.

\section{B-RAN Framework}

\begin{figure*}
\begin{raggedright} \centering\includegraphics[width=0.8\textwidth]{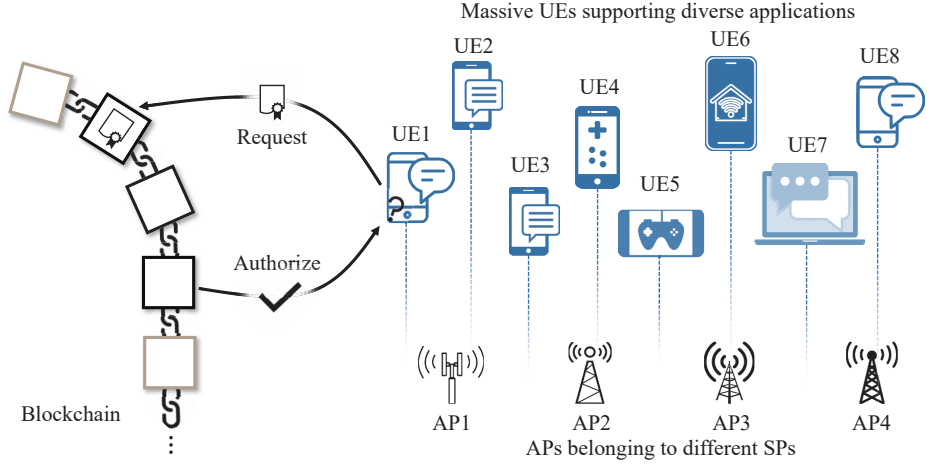}

\end{raggedright} \centering{}

\caption{Conceptual illustration of self-organized B-RAN.\label{fig:illustration}}
\vspace{-0.6cm}
\end{figure*}

B-RAN can integrate multiple networks across SPs for diverse applications.
As illustrated in Fig. \ref{fig:illustration}, B-RAN is self-organized
by APs belonging to multiple SPs, massive UEs, and a blockchain maintained
by miners. In B-RAN, there is a confederacy of SPs (organizations
or individuals) that are willing to provide public wireless access
under shared control. These SPs in B-RAN allow the greater pool of
UEs to access their APs and networks by receiving payment or credit
for reciprocal services. Blockchain, as the name suggests, is a chain
of interconnected blocks\cite{Nakamoto2008}, which acts as a public
ledger in B-RAN for recording, confirming, and enforcing digital actions
in smart contracts.

Traditional roaming in cellular networks may be viewed as a legacy
solution that has poor cost structure. There has been little incentive
for cellular RAN operators to coordinate services and coverages. It
is so challenging to reach mutual agreement on roaming between hundreds
of organizational SPs without relying on a trusted third party, not
to include unknown individual SPs. During roaming, UEs are also required
to select an AP from its subscribed operator, even though it may not
deliver services of the most bandwidth-efficient and highest quality
in a given environment. With the help of blockchain, B-RAN can form
an expansive cooperative network of different operators to deliver
high quality service at high spectrum efficiency while protecting
the interests of all legal participants at the same time.

\subsection{Access Workflow}

Since B-RAN is envisioned to be broadly inter-operative and to support
multiple advanced wireless services and standards. This work focuses
on the most basic access approach for which the procedure is shown
in Fig. \ref{fig:workflow}.
\begin{itemize}
\item In preparation for access, UEs and SPs should first enter an SLA containing
the details including service types, compensation rates, among other
terms. (For example, SPs can first publish their service quality and
charge standard, and UEs select suitable SPs according to the expenditure
and quality of service.) The service terms and fees will be explicitly
recorded in a smart contract authorized by the digital signatures
of both sides.
\item In step 1, the smart contract with the access request is committed
to the mining network and is then verified by miners.
\item In step 2, the verified contracts are assembled into a new block,
which is then added at the end of the chain.
\item In step 3, the block is accepted into the main chain after sufficient
blocks as confirmations built on top of it.
\item In step 4, the access service is delivered according to the signed
smart contract.
\end{itemize}
Clients can obtain access services more conveniently through the above
process instead of signing contracts with a specific MNO in advance.
The service duration in B-RAN is flexible and can be as short as a
few minutes or hours, which is different from typical long-term plans
(e.g., monthly plans). Hence, in B-RAN, mobile devices can access
suitable APs belonging to various SPs which likely provide higher
quality coverage for the UEs in their current locations. UEs can prolong
access services by renewing the contract earlier before the previous
one expires in order to continue the connection status. Hence, service
latency in this context refers to the delay when a UE accesses an
unknown network for the first time, rather than the delay in the physical
layer transmission.

Mathematically, we can describe the request structure by $\mathsf{REQ}_{k}\left(t_{k}^{a},\tau_{k}^{c}\right)$
shown in Fig. \ref{fig:workflow}, where $t_{k}^{a}$ and $\tau_{k}^{c}$
are the arrival epoch and the service duration of request $k$, respectively.
Assume that the access requests are mutually independent, and arrive
as a Poisson process with rate $\lambda^{a}$. Equivalently, the inter-arrival
time between two requests $U^{a}$ follows exponential distribution
with mean $T^{a}=1/\lambda^{a}$. Based on well-known studies such
as \cite{Cooper1972}, the random service time $\tau_{k}^{c}$ is
also expected to be exponential with mean $T^{c}=1/\lambda^{c}$.
Note that in this work, we consider a tract covered by multiple trustless
SPs (organizations or individuals). The APs belonging to these SPs
in the tract are capable of providing access services for at most
$s$ UEs. In other words, $s$ can be viewed as the maximum number
of access links in the considered tract. The block size limit is considered
as unlimited when we focus on one tract.

\subsection{Consensus Mechanism}

\begin{figure*}
\begin{raggedright} \centering\includegraphics[width=0.8\textwidth]{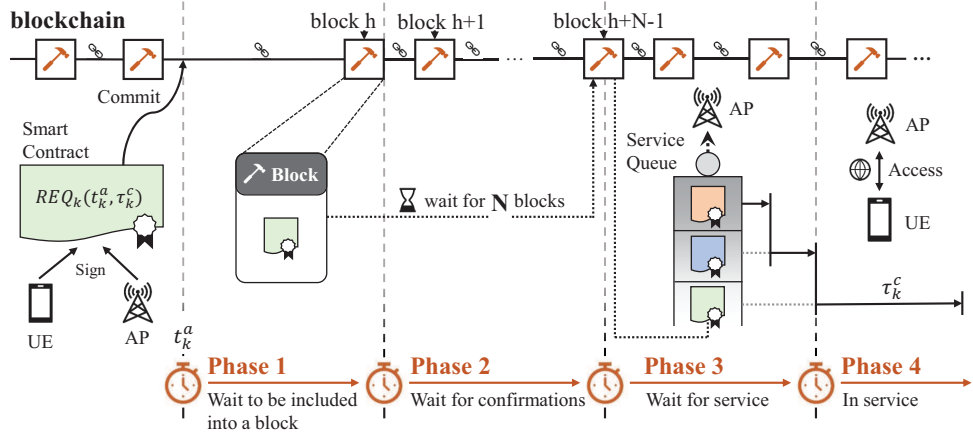}

\end{raggedright} \centering{}\caption{Four stages of the access workflow in B-RAN.\label{fig:workflow}}
\vspace{-0.6cm}
\end{figure*}
B-RAN, as a decentralized system, requires proper consensus mechanisms
for consistency\cite{Xu2019,Wang2019}. Proof-of-Work (PoW) has been
widely used in practice and proven to be secure in cryptocurrencies
such as Bitcoin. In PoW, network maintainers, also known as miners,
need to obtain a hash value below a given target by repeatedly guessing
a random variable named nonce. However, PoW-based consensus mechanisms
consume a tremendous amount of energy, which is likely unbearable
for energy-limited mobile devices. Consequently, Proof-of-Device (PoD)
is proposed for B-RAN as a promising alternative by utilizing the
fact that wireless access usually depends on a hardware device associated
with a unique identifier (ID)\cite{Ling2019}. PoD deploys unique
tamper-proof IDs, such as the international mobile equipment identity
(IMEI) and Identifier for Advertisers (IFA), in order to distinguish
different network entities. Also, due to variations during manufacture,
every device has multiple hardware-dependent features, which could
constitute a unique RF fingerprinting for each device and can be identified
from the transmitted RF signal\cite{Wu2018}. In the real world, forging
an identity of a device is often costly, whereas creating multiple
identities is almost costless in cryptocurrencies. Therefore, PoD
can safeguard the security of B-RAN without expending immense computing
power and is thus suitable for wireless networks. Notably, PoW, PoD,
and other alternatives can be put in the same class since all of them
are based on hash puzzles. We will further discuss and model the block
generation process of a hash-based consensus mechanism in Section
III.

\subsection{Implementation}

\textcolor{blue}{}In order to evaluate our established model, we
will provide demonstrative experimental results from a home-built
prototype throughout the whole article. We implement this version
of B-RAN prototype on a workstation with Intel Core CPU I7-8700K and
32GB RAM. Our prototype consists of a standard file system for data
storage, a key-value database for file index, and several core modules
written in Python. The prototype supports both PoW and PoD consensus
mechanisms and adopts the fast smart contract deployment (FSCD)\cite{Le2019}.
We configure different Ethernet ports as UEs and APs and set up the
integrated development environment (IDE), wherein UEs propose random
access requests according to the input configurations and APs provide
services based on the workflow given in Section II-A. During tests,
the prototype can track running statistics and provide them as output
results. More details about the implementation, e.g., the concepts
of root contract and contract cache, can be found in \cite{Le2019}.
Note that, in this work, the average service time $T^{c}$ as time
unit is set to unity without loss of generality, so time is measured
as relative variables in terms of time unit $T^{c}$.

\section{Mining Model}

\subsection{Hash-Based Mining}

In this section, we will present a general model for hash-based
puzzles to describe the block generation process, also known as mining.
Usually, block propagation in networks is much faster than block generation,
and we thus ignore the block spreading delay. We will verify the mining
model by using the real data of Bitcoin and Ethereum.

Generally, a hash puzzle is to find a suitable answer to satisfy the
following conditions:
\begin{equation}
\text{Hash}(\text{HP}+\text{DP}+\text{TS}+\text{OF})<\text{GT}.\label{mp:HashPuzzle}
\end{equation}
Here \textquotedblleft HP\textquotedblright{} stands for the hash
pointer to a previous block, ``DP'' means the data payload, ``TS''
is the current timestamp, ``GT'' represents a given target, and
``OF'' is the optional field depending on the specific type of the
hash puzzle. For instance, in the PoW protocol, the optional field
can be any random number, whereas, in the PoD protocol, the optional
field is given by the hardware ID. In PoW, the range of the optional
field is unlimited. Hence, a miner can guess many times to find a
correct nonce, and hence the number of trials is only restricted by
the mining rigs. In PoD, the optional field is given by the tamper-proof
ID such that each device can perform the hash computation only once
for each slot, thereby largely reducing the power consumption. The
premise behind that is the entities in real RANs cannot be effortlessly
forged or created. The characteristics, e.g., security level and power
consumption, of different hash-based consensus mechanism can be traded
off by properly choosing the optional field.

\begin{table}
\caption{Notations in the model of mining.}
\begin{raggedright} \centering%
\begin{tabular}{|c|c|c|c|}
\hline
Symbols & Explanations & Symbols & Explanations\tabularnewline
\hline
\hline
$p$ & Success probability of a single hash trial &  & \tabularnewline
\hline
$m$ & Number of hash trials & $\tau$ & Length of time conducted $m$ trials\tabularnewline
\hline
$W^{b}$ & Number of failures preceding the first success & $U^{b}$ & Continuous time preceding the first success\tabularnewline
\hline
$\lambda^{b}=\frac{mp}{\tau}$ & Mean number of successes per unit time (success rate) & $T^{b}=\frac{1}{\lambda^{b}}$ & Average block time\tabularnewline
\hline
$B(n,t,t+h)$ & Event that $n$ blocks occur in the interval $(t,t+h)$ & $\beta$ & Relative mining rate of an attacker\tabularnewline
\hline
\end{tabular}\end{raggedright} \centering{}

\vspace{-0.6cm}
\end{table}

\subsection{Modeling of Hash Trials}

For a general hash-based mining process, each hash trial can be regarded
as an independent Bernoulli experiment with success probability $p$
as the timestamp keeps changing. In a sequence of independent Bernoulli
trials, the probability that the first block is generated after exactly
$m$ failures is $\left(1-p\right)^{m}p$. Let $W^{b}$ be the number
of failures preceding the first success. Then, $W^{b}$ follows geometric
distribution:
\begin{equation}
\text{Pr}\left\{ W^{b}=m\right\} =\left(1-p\right)^{m}p,\thinspace\thinspace m=0,1,...\label{mp:wbpmf}
\end{equation}
Also, the probability that the number of trials preceding the first
success is at least $m$ equals
\begin{equation}
\text{Pr}\left\{ W^{b}\geq m\right\} =\left(1-p\right)^{m},\thinspace\thinspace m=0,1,...\label{mp:wbcdf}
\end{equation}
Here, $W^{b}$ can be viewed as the waiting period before a block
is successfully generated, and its distribution can be described by
\eqref{mp:wbpmf} and \eqref{mp:wbcdf}. The average number of successes
in $m$ independent trials is $mp$.

Hence, if $m$ hash trials are conducted in a time interval of length
$\tau$, then the success rate defined as the mean number of successes
per unit time is $\lambda^{b}=mp/\tau$. Now let $p\rightarrow0$
and $m\rightarrow\infty$ in the way that keeps $\lambda^{b}$ constant.
We can visualize an experiment with infinite hash trials performed
within interval $\tau$. Successive trials are infinitesimally close
with vanishingly small probability of success, but the mean number
of successes remains a non-zero constant $\lambda^{b}\tau$. By using
the fact that the geometric distribution approaches the exponential
distribution in the limit, we have
\begin{equation}
\lim_{m\rightarrow\infty}(1-p)^{m}\mid_{p=\frac{\lambda^{b}\tau}{m}}=\lim_{m\rightarrow\infty}\left(1-\frac{\lambda^{b}\tau}{m}\right)^{m}=\exp(-\lambda^{b}\tau).\label{eq:limitcondition}
\end{equation}
Define a random variable $U^{b}$ as the continuous time before preceding
the first success. Since $p=\frac{\lambda^{b}\tau}{m}$, then we have
\begin{equation}
\text{Pr}\left\{ U^{b}>\tau\right\} =\text{Pr}\left\{ W^{b}>m\right\} =\exp(-\lambda^{b}\tau).\label{eq:expdis}
\end{equation}
$U^{b}$ is also named as the block time in the context of blockchain.
The average block time, denoted by $T^{b}$, is equal to $1/\lambda^{b}$
according to \eqref{eq:expdis}. $\lambda^{b}$ is thus called as
the mining rate representing the block generation rate.

\eqref{eq:limitcondition} and \eqref{eq:expdis} imply that, if the
number of hash computations of the whole network in a unit time tends
to infinity, then the length of time between two successive blocks,
i.e., $U^{b}$'s, would follow the exponential distribution. Interestingly,
most of mature PoW blockchain networks indeed perform a huge number
of hash computations every moment. For example, the minimum hash rates
of bitcoin and Ethereum during 2018 are 14,891 TH/s and 159TH/s\footnote{Source: bitinfocharts.com/, accessed Nov., 2019.}.
These are humongous numbers in real world, and practically support
the limiting condition that the number of trials tends to infinity.
When there are massive hardware devices participating in mining, the
condition also holds for PoD. Note that the block times $U^{b}$'s
are always mutually independent and identically distributed because
of the memoryless property of geometric distribution, even if the
exponential approximation no longer holds.

    Now we further prove that blocks generate form a Poisson
process. Let $B(n,t,t+h)$ denote the event that $n$ blocks are generated
in interval $(t,t+h)$. As $h\rightarrow0$, the probability that
at least one block will is generated in $(t,t+h)$ is
\begin{align*}
\text{Pr}\left\{ B(n\geq1,t,t+h)\right\}  & =1-\text{Pr}\left\{ B(n=0,t,t+h)\right\} =1-\text{Pr}\left\{ U^{b}>h\right\} \\
 & =1-\exp(-\lambda^{b}h)=\lambda^{b}h+o(h),\thinspace\,h\rightarrow0.
\end{align*}
We note that $o(h)$ is an infinitesimal of higher order such that
$\lim_{h\rightarrow0}o(h)/h=0$. The probability that exactly one
block will occur in $(t,t+h)$ is
\begin{align*}
\text{Pr}\left\{ B(n=1,t,t+h)\right\}  & =\int_{0}^{h}\text{Pr}\left\{ U_{1}^{b}=h-u\right\} \text{Pr}\left\{ U_{2}^{b}>u\right\} du\\
 & =\int_{0}^{h}\lambda^{b}\exp(-\lambda^{b}(h-u))\exp(-\lambda^{b}u)du\\
 & =\lambda^{b}h\exp(-\lambda^{b}h)=\lambda^{b}h+o(h),\thinspace\thinspace h\rightarrow0,
\end{align*}
where $U_{1}^{b}$ and $U_{2}^{b}$ represent the generation times
of first and second blocks since the epoch $t$. As $h\rightarrow0$,
the probability of occurrence of two or more blocks in $(t,t+h)$
is
\begin{align*}
\text{Pr}\left\{ B(n\geq2,t,t+h)\right\} =1-\exp(-\lambda^{b}h)-\lambda^{b}h\exp(-\lambda^{b}h) & =o(h),\thinspace\thinspace h\rightarrow0.
\end{align*}
 In conclusion, the block generation can be modeled as a Poisson process
with mining rate $\lambda^{b}$. The notations in this section are
summarized in Table II.

In\textbf{ Prototype Verification A}, we illustrate the model validity
of Poisson in Fig. \ref{fig:realdata} by empirical data. In our self-built
B-RAN prototype described in Fig. \ref{fig:realdata}(a), we deploy
five miners with equal mining rates and set the block generation rate
to $\lambda^{b}=1/10$, or equivalently the average block time $T^{b}=1/\lambda^{b}=10$.
We measure the block time for 10,000 blocks and plot the histogram
to approximate the distribution of block time $U^{b}$. One can see
that the Poisson model closely fits the data. Furthermore, we collected
the empirical data from Bitcoin (51,036 blocks starting at height
530,114\footnote{Source: www.blockchain.com/en/btc/blocks/, accessed Nov., 2019.})
and Ethereum (2,078,000 blocks starting at height 5,806,289\footnote{Source: etherscan.io/block/, accessed Nov., 2019.}),
respectively, to demonstrate the validity of the model in practice.
The collected Bitcoin and Ethereum data are also consistent with the
model in a real network where propagation delay does exist. Furthermore,
the measured average block times of both Bitcoin and Ethereum are
close to the block times predefined by protocols. Therefore, even
though the propagation delay is simply neglected, the Poisson model
can still well characterize the block generation process in practice.

\begin{figure*}
\begin{raggedright} \centering \subfigure[]{\includegraphics[width=0.3\textwidth]{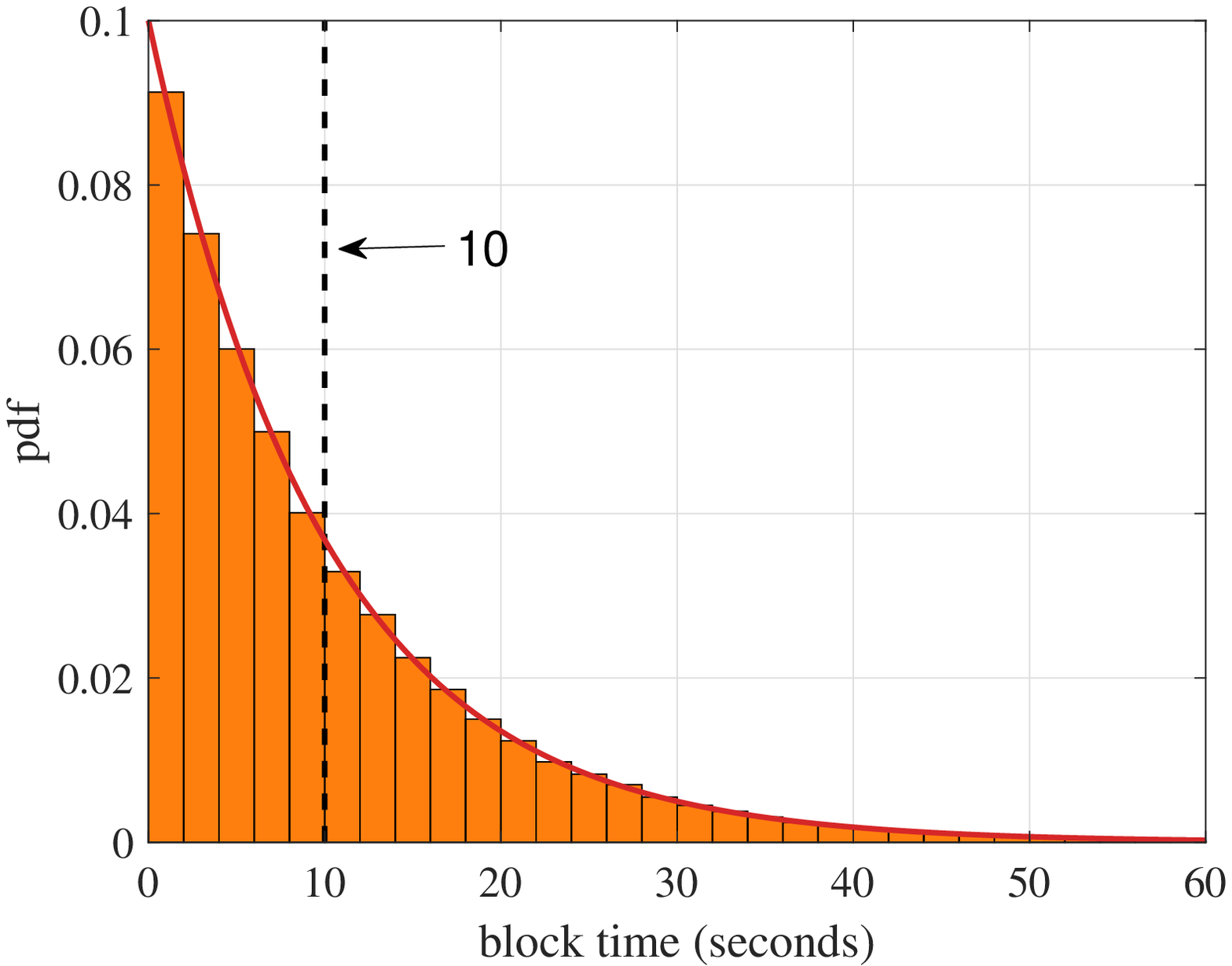}}
\hfill{}\subfigure[]{\includegraphics[width=0.3\textwidth]{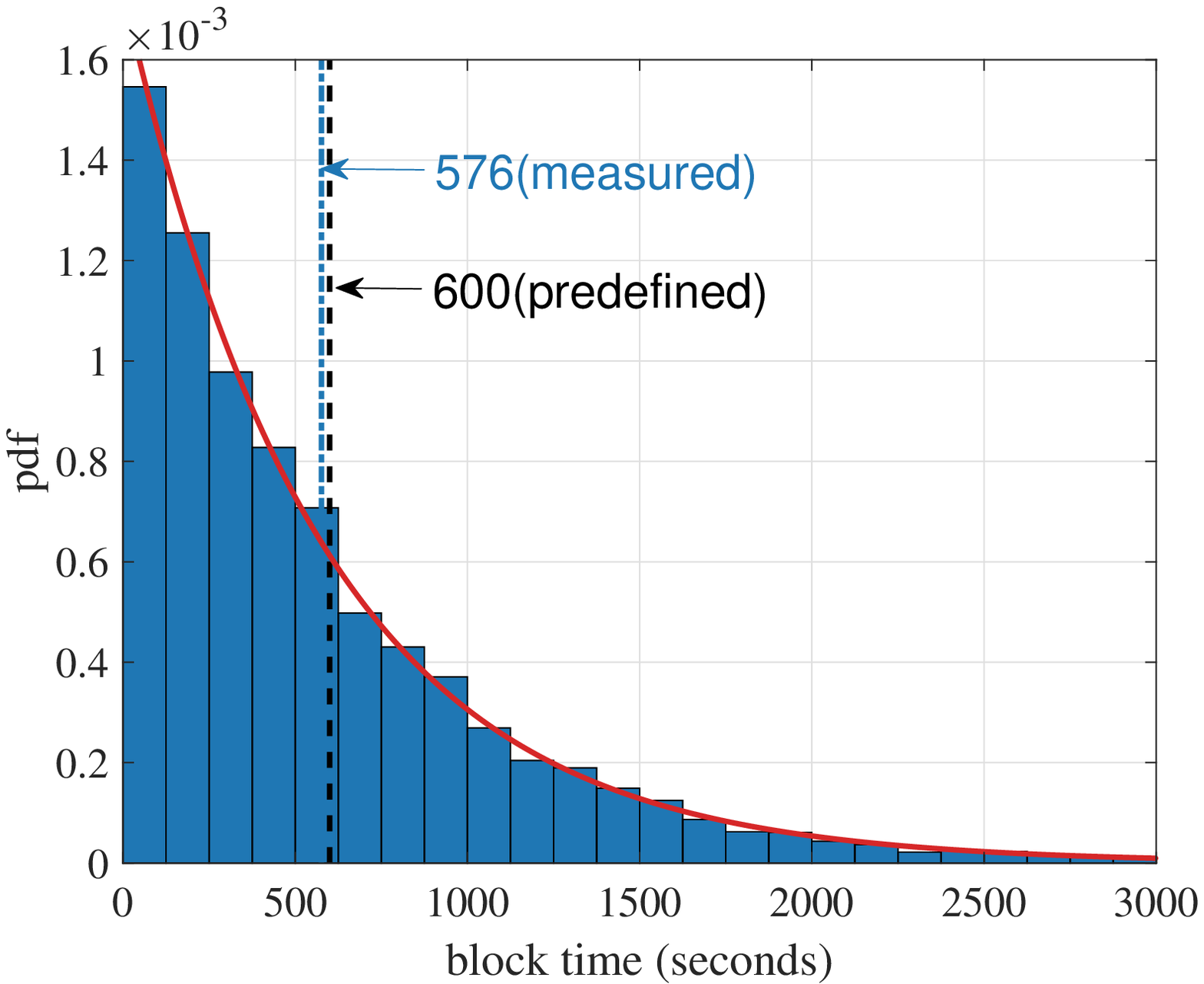}}
\hfill{}\subfigure[]{\includegraphics[width=0.3\textwidth]{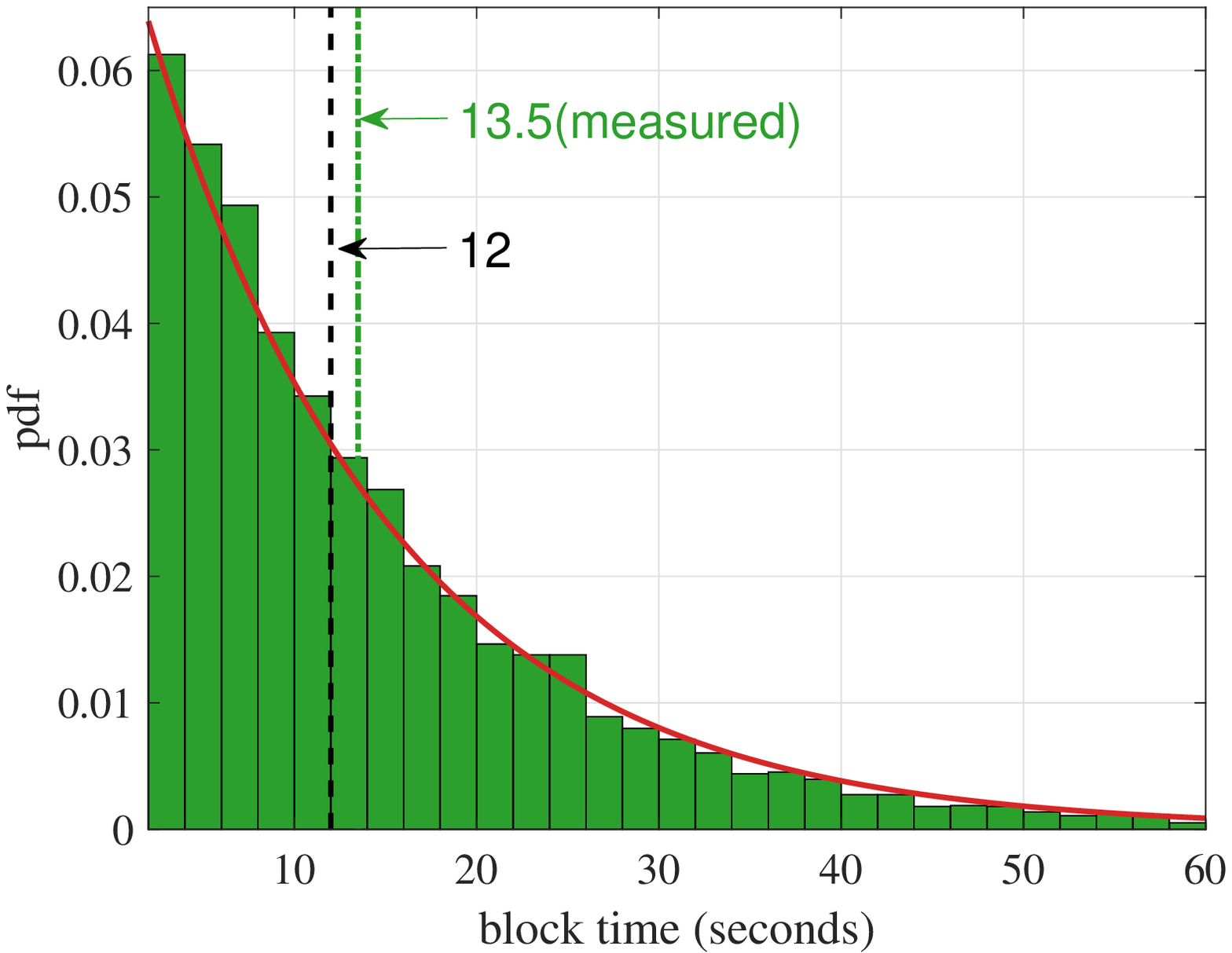}}

\end{raggedright} \centering{}\caption{The distribution of block time from real data and simulations. (Please
see footnotes 2 and 3 for sources.) (a) Self-built B-RAN prototype.
(b) Bitcoin. (c) Ethereum. \label{fig:realdata}}
\vspace{-0.6cm}
\end{figure*}

\subsection{Attacker's Model}

In this part, we consider an adversary who mounts an alternative
history attack by attempting to generate a longer fraudulent block
chain. The adversarial has to generate new blocks in the same way
as honest miners since the hash value can hardly be tampered. Hence,
the fraudulent blocks are also generated as Poisson. Now assume that
the mining rate of the attacker is $\beta\lambda^{b}$, while the
mining rate of honest miners is $\lambda^{b}$, i.e., the attacker
controls $\frac{\beta}{1+\beta}$ fraction of hash power\footnote{Usually, an attacker can hardly amass more hash power than the sum
of other honest miners, i.e., $\beta<1$; Otherwise, the attacker
already dominates the mining network.}. Both the attacker and the benign network are mining independently.
From the additive property of Poisson processes, the sum generation
rate of both benign and fraudulent block is $\left(\beta+1\right)\lambda^{b}$.
Note that, in an alternative history attack, the attacker will not
publish the fraudulent chain until it creates a more extended branch.
Consequently, the benign participants of B-RAN are unaware of the
existence of the attacker, and can only observe the blocks generated
by honest miners with the generation rate $\lambda^{b}$ until a fraud
succeeds. We will provide a detailed procedure of the alternative
history attack in Section VII.

\section{B-RAN Queuing Model}

 According to the B-RAN framework in Section II, we divide the service
process of a valid request into four stages: 1) waiting to be included
into a block; 2) waiting for confirmations; 3) waiting for service;
4) in service.  Naturally, we can model the process using several
queues in tandem based on the four phases. However, it is worth pointing
out that, in the third stage, the requests in the same block arrive
simultaneously, and the number of requests is related to the block
generation time $U^{b}$.  Hence, this queue is non-Markovian since
some previous events (e.g., $U^{b}$) beside the current state of
the queue may affect its future state. Usually, a non-Markovian queue
is difficult to tackle. Therefore, we should carefully select the
state space of B-RAN for further analysis.

Let $i_{n}$ be the number of pending requests that already have $n$
confirmations. A pending request is confirmed after received $N$
confirmations. Then $j=\sum_{n=N}^{+\infty}i_{n}$ denotes as the
number of confirmed requests that have not been served yet. In this
approach, B-RAN can be fully identified by state $E(i_{0},i_{1},...,i_{N-1},j)$
belonging to the $\left(N+1\right)$-dimensional state space $\mathbb{Z}_{+}^{N+1}$,
where $\mathbb{Z}_{+}^{N+1}$ represents the set of all $\left(N+1\right)$-tuples
of non-negative integers.  Formally, the queuing process of B-RAN
can be completely described by a vector stochastic process as follows:
\[
\{X(t)\in\mathbb{Z}_{+}^{N+1},t\geq0\}.
\]
 B-RAN is said to be in state $E(i_{0},i_{1},...,i_{N-1},j)$ at
time $t$ if $X(t)=E(i_{0},i_{1},...,i_{N-1},j)$. Note that the way
to establish a queuing model is not unique. We define B-RAN by using
the queuing model $\{X(t),t\geq0\}$ owing to two critical properties,
as shown in Theorem \ref{thm:queuemodel}.
\begin{thm}
\label{thm:queuemodel}The queuing model $\left\{ X(t),\thinspace t\ge0\right\} $
is a continuous time-homogeneous Markov process with properties:
\begin{enumerate}
\item Markov:
\[
\text{Pr}\left\{ X(t+h)=E|X(t)=E^{\prime},X(u)\text{ for }0\le u\le t\right\} =\text{Pr}\left\{ X(t+h)=E|X(t)=E^{\prime}\right\} ;
\]

\item Time homogeneity:
\[
\text{Pr}\left\{ X(t+h)=E|X(t)=E^{\prime}\right\} =\text{Pr}\left\{ X(t)=E|X(0)=E^{\prime}\right\} .
\]
\end{enumerate}
\end{thm}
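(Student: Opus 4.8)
The plan is to show that every state transition of $\{X(t),\,t\ge 0\}$ is triggered by one of a small number of events whose inter-event times are exponential and mutually independent, and then to use the memoryless property of the exponential law to collapse all dependence on the past trajectory $\{X(u):0\le u\le t\}$ onto the present state $E'$. First I would enumerate the events that can change the state. There are only three: (i) a request arrival, at rate $\lambda^a$, which increments $i_0$; (ii) a block generation, at rate $\lambda^b$, which advances every pending request by one confirmation level, so that the content of level $n$ moves to level $n+1$, the $i_{N-1}$ requests at the top level become confirmed and are added to $j$, and level $0$ is left empty; and (iii) a service completion, at the state-dependent rate $\min(s,j)\lambda^c$, which decrements $j$ by one. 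The crucial observation is that each such transition is a deterministic function of $E'$ together with the identity of the triggering event; in particular the batch of $i_{N-1}$ requests that enters the service stage at a block epoch is read directly off the present state, so the ``simultaneous arrival'' effect that rendered the naive third-stage queue non-Markovian is now fully encoded in $E'$.

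With the transitions in hand, the Markov property follows from memorylessness. The residual inter-arrival time is exponential by the Poisson arrival assumption, the residual block time $U^b$ is exponential by the derivation of Section III, and the residual service time is exponential by assumption; moreover the three clocks run independently. Applying $\text{Pr}\{U>t+h\mid U>t\}=\text{Pr}\{U>h\}$ to each clock, the law of the time to the next event of each type---and hence the law of $X(t+h)$---depends on the history only through $E'$, which is precisely $\text{Pr}\{X(t+h)=E\mid X(t)=E',X(u)\text{ for }0\le u\le t\}=\text{Pr}\{X(t+h)=E\mid X(t)=E'\}$. Time homogeneity is then immediate: the three rates $\lambda^a$, $\lambda^b$, and $\min(s,j)\lambda^c$ depend on the state alone and carry no explicit dependence on absolute time, so translating the time origin leaves every transition rate unchanged and yields $\text{Pr}\{X(t+h)=E\mid X(t)=E'\}=\text{Pr}\{X(t)=E\mid X(0)=E'\}$.

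I expect the main obstacle to be the point flagged in the text just before the theorem: ruling out any residual temporal dependence carried by the block-generation process. The delicate step is to argue that at a generic time $t$ the elapsed time since the last block provides no predictive information, so that $E'$ genuinely suffices; this rests entirely on the exponential (geometric-limit) law for $U^b$ obtained in Section III, since had $U^b$ been a general renewal variable one would have to augment the state with the age of the current block interval and the chosen $(N+1)$-dimensional space would fail to be Markov. A clean way to make the argument rigorous is to couple the three independent exponential clocks and expand the joint transition over $(t,t+h)$ by a competing-exponentials (equivalently, $o(h)$) analysis, showing that each elementary transition probability has the form $(\text{rate})\cdot h+o(h)$ with the rate determined by $E'$ alone; summing these establishes both stated properties simultaneously.
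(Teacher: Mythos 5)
Your proposal is correct and follows essentially the same route as the paper's proof: enumerate the three competing event types (arrival at rate $\lambda^{a}$, block generation at rate $\lambda^{b}$, service completion at rate $\min(j,s)\lambda^{c}$ obtained from the minimum of independent exponentials), show each transition probability over $(t,t+h)$ has the form $(\text{rate})\cdot h+o(h)$ with the rate determined by the current state alone, and conclude both the Markov property and time homogeneity from that form. Your explicit remarks that the block-epoch batch transfer is a deterministic function of the present state and that the exponential law of $U^{b}$ is what removes any residual age dependence are implicit in the paper's argument rather than spelled out, but they are the same underlying mechanism, not a different proof.
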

\begin{proof}
 Recall that, as claimed in Section II, requests arrival according
to a Poisson process with rate $\lambda^{a}$, and the service times
are exponentially distributed with mean $1/\lambda^{c}$, independently
of each other. According to the mining model in Section III, the block
generation is a Poisson process with rate $\lambda_{b}$. In a nutshell,
request inter-arrival times $U^{a}$, block times $U^{b}$, and service
times $U^{c}$ are exponential with mean $1/\lambda^{a}$, $1/\lambda^{b}$,
and $1/\lambda^{c}$, respectively, and independent of each other.

The exponential service times $U^{c}$ imply that if there are $j$
UEs in service, the rate at which service completions occur is $j\lambda^{c}$.
To show this, suppose that $U_{1}^{c},U_{2}^{c},...,U_{j}^{c}$ are
the duration of $j$ i.i.d. exponential simultaneously running time
intervals with mean $1/\lambda^{c}$. Let $U_{min}^{c}=\min\left\{ U_{1}^{c},U_{2}^{c},...,U_{j}^{c}\right\} $
be the minimum service time. Observe that $U_{min}^{c}$ will exceed
$u$ if and only if all $U_{j}^{c}$ exceed $u$. Hence,
\begin{align*}
\text{Pr}\left\{ U_{min}^{c}>u\right\}  & =\text{Pr}\left\{ U_{1}^{c}>u\right\} \cdot\text{Pr}\left\{ U_{2}^{c}>u\right\} \cdots\text{Pr}\left\{ U_{j}^{c}>u\right\} \\
 & =\exp\left(-j\lambda^{c}u\right).
\end{align*}
Only those UEs that are in service can possibly leave. Hence, the
service completion rate with $j$ simultaneous services is $j\lambda^{c}$
for $0\le j\le s$. Since at most $s$ UEs can be in service simultaneously,
obviously the service completion rate is at most $s\lambda^{c}$.
Hence, we denote
\begin{equation}
\lambda_{j}^{c}=\min(j,s)\cdot\lambda^{c}\label{eq:lambdacj}
\end{equation}
 to represent the service completion rate of B-RAN compactly.

\begin{figure*}
\begin{raggedright} \centering\includegraphics[width=0.7\textwidth]{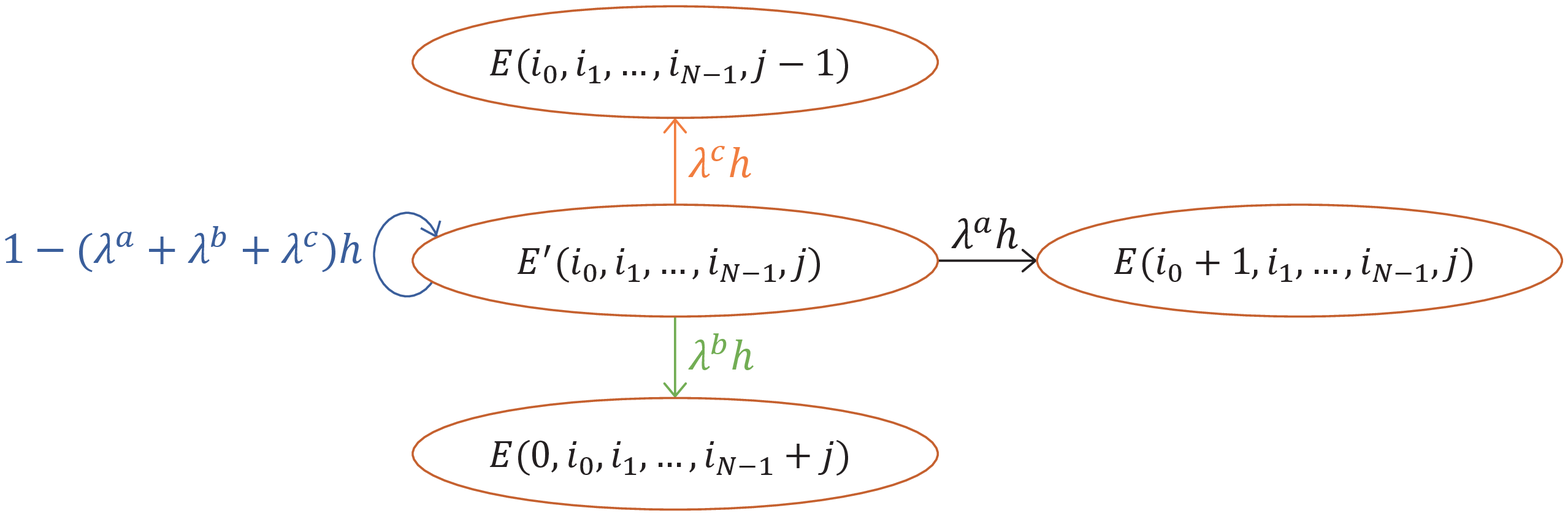}

\end{raggedright} \centering{}\caption{State transition graph of $E\left(i_{0},i_{1},...,i_{N-1},j\right)$.\label{fig:statetransition}}

\vspace{-0.6cm}
\end{figure*}
Similarly, the time until the next event (an request arrival, a block
generation, or a service completion) is also exponentially distributed
with rate $(\lambda^{a}+\lambda^{b}+\lambda_{j}^{c})$. The probability
that an event occurs in $(t,t+h)$ is $1-\exp\left(-\left(\lambda^{a}+\lambda^{b}+\lambda_{j}^{c}\right)h\right)$,
which tends to $\left(\lambda^{a}+\lambda^{b}+\lambda_{j}^{c}\right)h+o(h)$
as $h\rightarrow0$. The length of time required for the event to
occur and the type of the event are independent. There are several
possible changes to a state. If a new request arrives, the state of
B-RAN will switch from $E^{\prime}\left(i_{0},i_{1},...,i_{N-1},j\right)$
to $E\left(i_{0}+1,i_{1},...,i_{N-1},j\right)$. The probability that
a new request arrives in $(t,t+h)$ is
\begin{align}
 & \text{Pr}\left\{ X(t+h)=E\left(i_{0}+1,i_{1},...,i_{N-1},j\right)|X(t)=E^{\prime}\left(i_{0},i_{1},...,i_{N-1},j\right)\right\} \label{queue:a}\\
 & =\frac{\lambda^{a}}{\lambda^{a}+\lambda^{b}+\lambda_{j}^{c}}\left(\left(\lambda^{a}+\lambda^{b}+\lambda_{j}^{c}\right)h+o(h)\right)=\lambda^{a}h+o(h),\;(h\rightarrow0).\nonumber
\end{align}
If a new block is generated, all existing blocks will get one more
confirmation, and the state of B-RAN will move from $E^{\prime}\left(i_{0},i_{1},...,i_{N-1},j\right)$
to $E\left(0,i_{0},i_{1},...,i_{N-1}+j\right)$. The probability that
a block is generated in $(t,t+h)$ can be found to satisfy
\begin{equation}
\lim_{h\rightarrow0}\text{Pr}\left\{ X(t+h)=E\left(0,i_{0},i_{1},...,i_{N-1}+j\right)|X(t)=E^{\prime}\left(i_{0},i_{1},...,i_{N-1},j\right)\right\} =\lambda^{b}h+o(h).\label{queue:b}
\end{equation}
If an access service is ended ($j\geq1$ at this instant), the state
of B-RAN will switch from $E^{\prime}\left(i_{0},i_{1},...,i_{N-1},j\right)$
to $E\left(i_{0},i_{1},...,i_{N-1},j-1\right)$. The probability that
a service completion occurs in $(t,t+h)$ is
\begin{align}
\lim_{h\rightarrow0}\text{Pr}\left\{ X(t+h)=E\left(i_{0},i_{1},...,i_{N-1},j-1\right)|X(t)=E^{\prime}\left(i_{0},i_{1},...,i_{N-1},j\right)\right\}  & =\lambda_{j}^{c}h+o(h).\label{queue:c}
\end{align}
The probability that no event occurs in $(t,t+h)$ is given by
\begin{equation}
\lim_{h\rightarrow0}\text{Pr}\left\{ X(t+h)=X(t)\right\} =1-(\lambda^{a}+\lambda^{b}+\lambda_{j}^{c})h+o(h).\label{queue:noevent}
\end{equation}
The probability that more than one event occurs in $(t,t+h)$ is $o(h)$
as $h\rightarrow0$.

Now we have obtained the transition probabilities $\text{Pr\ensuremath{\left\{  X(t+h)=E|X(t)=E^{\prime}\right\} } }$
for any possible $E$ and $E^{\prime}$ in the state space. Observe
that the transition probabilities are irrelevant to the starting time
$t$, which indicates that $\left\{ X(t),\;t\geq0\right\} $ is homogeneous
in time. Moreover, the transition probabilities are independent of
the states of previous moments, which implies the Markov property.
Therefore, we have proven that $\left\{ X(t),\;t\geq0\right\} $ is
a time-homogeneous Markov process.
\end{proof}
Although the queuing model established by Theorem \ref{thm:queuemodel}
is a vector stochastic process with possibly high dimensions, we would
like to emphasize that such a queuing model is more tractable than
the original non-Markovian process. According to Theorem \ref{thm:queuemodel},
we can directly obtain the transition probabilities, given by Corollary
\ref{cor:transition}.
\begin{cor}
\label{cor:transition}The queuing model $\left\{ X(t),\thinspace t\ge0\right\} $
can be characterized by the transition probabilities $\text{Pr\ensuremath{\left\{ X(h)=E|X(0)=E^{\prime}\right\} }}$.
The cases with non-zero transition probabilities are covered by \eqref{queue:a}-\eqref{queue:noevent},
and probabilities of the rest equal zeros.
\end{cor}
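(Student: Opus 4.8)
The plan is to read this corollary as an immediate consequence of Theorem~\ref{thm:queuemodel}: once $\{X(t),\,t\ge 0\}$ is known to be a time-homogeneous Markov process, its law (given an initial distribution) is completely determined by the transition probabilities $\text{Pr}\{X(h)=E\mid X(0)=E'\}$, so it suffices to exhibit all of the non-vanishing ones and verify that the rest vanish to first order in $h$. First I would invoke the time-homogeneity established in Theorem~\ref{thm:queuemodel} to replace the generic quantity $\text{Pr}\{X(t+h)=E\mid X(t)=E'\}$ by $\text{Pr}\{X(h)=E\mid X(0)=E'\}$; this strips away the dependence on the starting epoch $t$ and lets me argue on the single interval $(0,h)$.

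Next I would enumerate the admissible single-event transitions out of an arbitrary state $E'(i_0,\ldots,i_{N-1},j)$. The proof of Theorem~\ref{thm:queuemodel} has already evaluated, for the interval $(0,h)$, the probability of each elementary event — a request arrival, a block generation, or a service completion — obtaining $\lambda^a h+o(h)$, $\lambda^b h+o(h)$, and $\lambda_j^c h+o(h)$ in \eqref{queue:a}, \eqref{queue:b}, and \eqref{queue:c}, the no-event probability $1-(\lambda^a+\lambda^b+\lambda_j^c)h+o(h)$ in \eqref{queue:noevent}, and the probability of two or more events as $o(h)$. The key structural observation I would make explicit is that each event type sends $E'$ to a single, uniquely determined successor state (incrementing $i_0$; shifting the confirmation counts up by one while resetting $i_0$ to $0$ and absorbing the old $i_{N-1}$ into $j$; or decrementing $j$ when $j\ge 1$), so the induced map from event type to target state is well defined and the four listed cases exhaust the transitions of order $h$.

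I would then close the argument by accounting for every remaining target state $E$. Any $E$ not of one of the four forms above cannot be reached from $E'$ by a single elementary event, so its occurrence in $(0,h)$ requires at least two events; by the last assertion in the proof of Theorem~\ref{thm:queuemodel} this has probability $o(h)$. Hence $\text{Pr}\{X(h)=E\mid X(0)=E'\}=o(h)$ for every such $E$, which is exactly the sense in which the remaining transition probabilities equal zero: their associated transition rates are $0$. Combining this with the Markov property and time-homogeneity from Theorem~\ref{thm:queuemodel}, the infinitesimal transition probabilities assembled from \eqref{queue:a}--\eqref{queue:noevent} determine the full transition semigroup and thereby characterize the process.

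The step I expect to be the main obstacle is not any computation — Theorem~\ref{thm:queuemodel} has already done the analytical work — but the careful bookkeeping and precise phrasing around the claim that ``the rest equal zeros.'' For a genuine continuous-time chain these probabilities are not literally zero for a fixed $h>0$, since multi-event sample paths carry positive (though $o(h)$) probability; the honest statement is that their leading-order rate term vanishes. I would therefore be explicit that the characterization is infinitesimal, i.e.\ at the level of transition rates, and that the finite-time transition probabilities are recovered from these rates through the Kolmogorov equations.
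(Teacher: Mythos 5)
Your proposal is correct and follows essentially the same route as the paper, which treats Corollary~\ref{cor:transition} as an immediate read-off from the proof of Theorem~\ref{thm:queuemodel}: the four event types yield the transitions in \eqref{queue:a}--\eqref{queue:noevent}, and any other target state requires two or more events in $(0,h)$, hence probability $o(h)$. Your added remark that ``the rest equal zeros'' must be understood infinitesimally (vanishing transition rates rather than literally zero probabilities for fixed $h>0$) is a legitimate and worthwhile clarification of what the paper leaves implicit.
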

Fig. \ref{fig:statetransition} can demonstrate the transition relationships
in the result of Corollary \ref{cor:transition}, though the case
of $j=0$ is slightly different. According to Corollary \ref{cor:transition},
the basic elements $\left\{ \lambda^{a},\lambda^{b},\lambda^{c},s\right\} $
and the number of confirmations $N$ are enough to determine the transition
probabilities and thus characterize the behaviors of B-RAN. Hence,
we introduce a 4-tuple $\Phi=\left\{ \lambda^{a},\lambda^{b},\lambda^{c},s\right\} $
as basic configurations to describe B-RAN. In the following sections,
we will look at B-RAN and analyze it from a deeper view in more dimensions.

\section{Latency Analysis of B-RAN}

\subsection{Steady-State Analysis}

\begin{figure}
\centering %
\begin{minipage}[t]{0.47\linewidth}%
\begin{raggedright} \centering \includegraphics[width=0.9\textwidth]{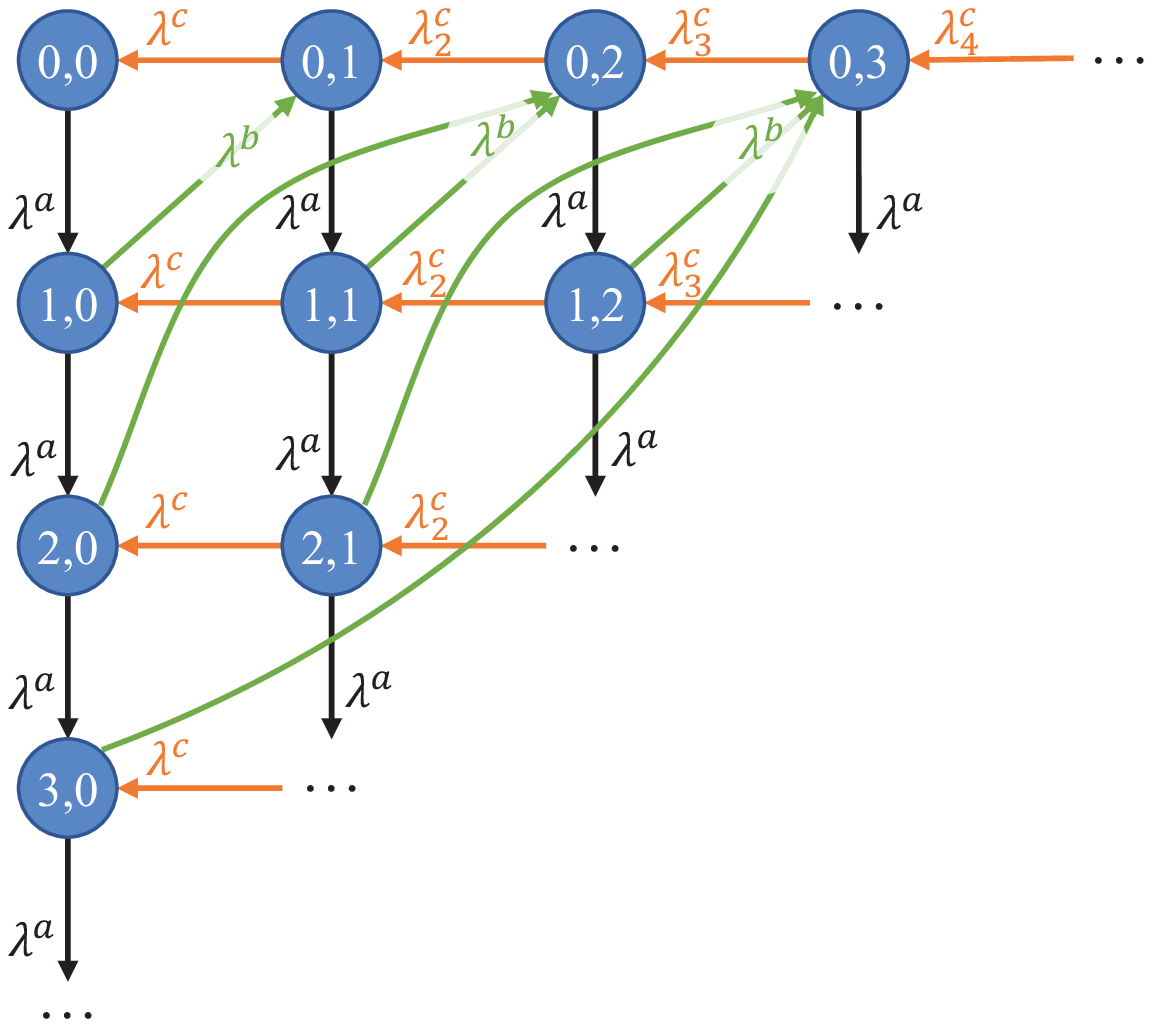}

\end{raggedright} \centering {}\caption{State space diagram in B-RAN.\label{fig:Statespacediagram}}
\end{minipage}\hfill{}%
\begin{minipage}[t]{0.47\linewidth}%
\begin{raggedright} \centering \includegraphics[width=0.9\textwidth]{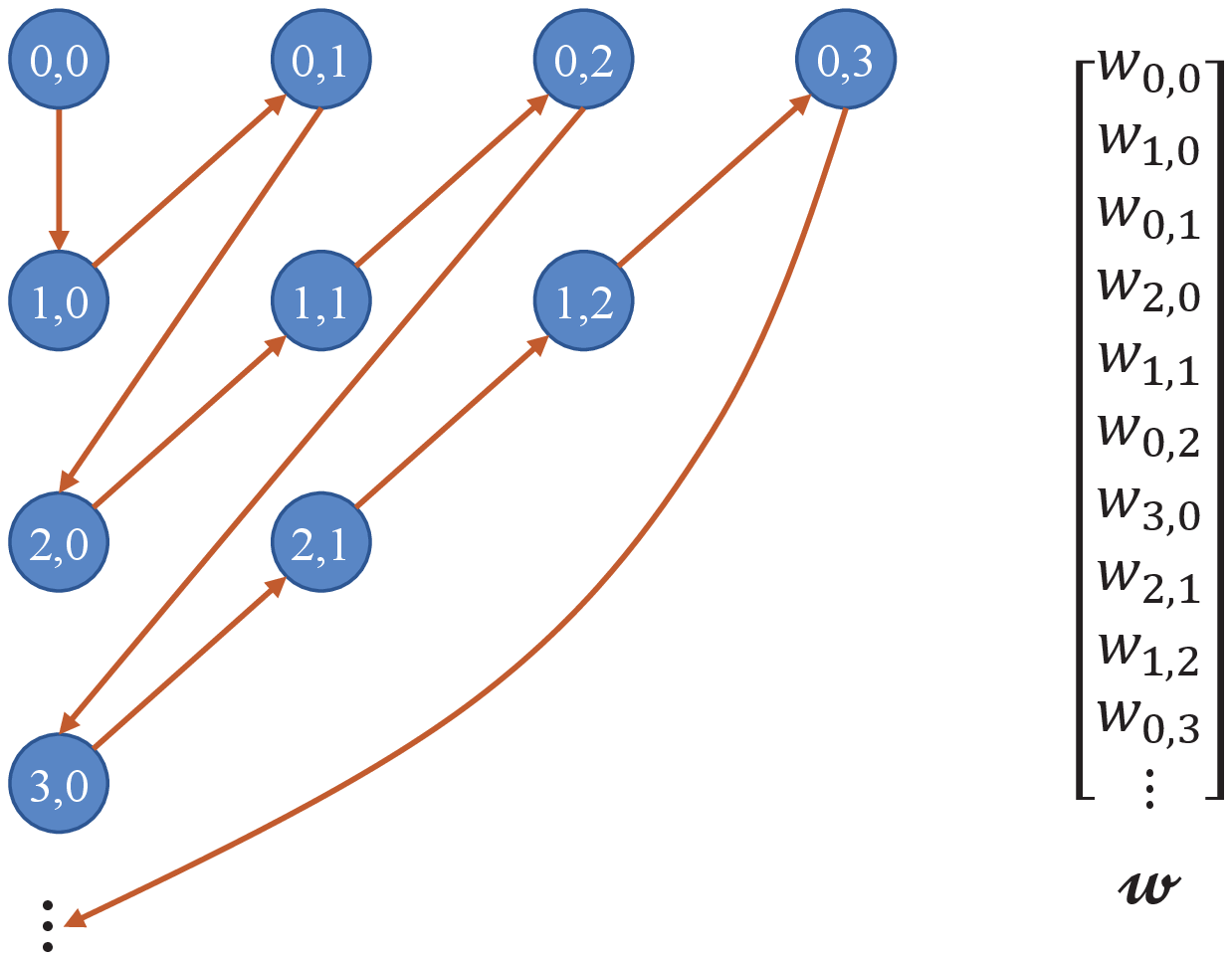}

\end{raggedright} \centering {}\caption{States $E(i,j)$ rearrangement into a row.\label{fig:order}}
\end{minipage}\vspace{-0.6cm}
\end{figure}
Now we have thus far modeled the B-RAN as a time-homogeneous Markov
process with $\left(N+1\right)$ dimensions.  However, the dimensionality
of the state space results in a complex probability transition graph
and is difficult to analyze in general. In the section, we will analyze
access service latency of B-RAN by starting from a relatively simple
one-confirmation case (i.e., only one confirmation is required to
confirm a request). In other words, a request is confirmed, as long
as it is assembled into a block. In the one-confirmation case, the
queuing model is presented by $\left\{ X(t)=E(i_{0},j),\quad t\geq0\right\} $.
We drop the subscript of $i_{0}$ for notational simplicity. State
$E(i,j)$ means that $i$ pending requests are waiting for assembling
into a block and $j$ confirmed requests are waiting for service.
Define $w_{i,j}(t)=\text{Pr}\left\{ X(t)=E(i,j)\right\} $ as the
probability of the queue in state $E(i,j)$ at time $t$. Now let
us investigate the transition probabilities during time $h$ with
the help of Corollary \ref{cor:transition}. By comparing the state
of B-RAN at time $t+h$ with that at time $t$, for all $i=1,2,...$
and $j=0,1,2,...$, we have
\begin{align*}
w_{i,j}(t+h)-w_{i,j}(t) & =w_{i-1,j}(t)\lambda^{a}h+w_{i,j+1}(t)\lambda_{j+1}^{c}h-w_{i,j}(t)\left(\lambda^{a}+\lambda^{b}+\lambda_{j}^{c}\right)h,
\end{align*}
where $\lambda_{j}^{c}$ is defined by \eqref{eq:lambdacj}. By letting
$h\rightarrow0$, we get the following differential-difference equation:
\begin{align*}
\frac{d}{dt}w_{i,j}(t) & =w_{i-1,j}(t)\lambda_{a}+w_{i,j+1}(t)\lambda_{j+1}^{c}-w_{i,j}(t)\left(\lambda_{a}+\lambda_{b}+\lambda_{j}^{c}\right).
\end{align*}
Let $\left\{ w_{i,j}\right\} $ be the steady-state distribution of
B-RAN. The equilibrium condition $\frac{d}{dt}w_{i,j}(t)=0$ yields
\begin{align}
w_{i+1,j}\lambda_{a}+w_{i,j+1}\lambda_{j+1}^{c}-w_{i,j}\left(\lambda_{a}+\lambda_{b}+\lambda_{j}^{c}\right) & =0.\label{latency:diffEqua}
\end{align}
 For the boundary cases ($i=0$), we have
\begin{align}
w_{1,0}\lambda_{1}^{c}-w_{0,0}\lambda^{a} & =0,\label{latency:diffEqub}\\
\left(\sum_{\ell=1}^{j}w_{\ell,j-\ell}\right)\lambda_{b}+w_{0,j+1}\lambda_{j+1}^{c}-w_{0,j}\left(\lambda_{a}+\lambda_{j}^{c}\right) & =0,\;\forall j=0,1,2,...\label{latency:diffEquc}
\end{align}
 The differential-difference equations \eqref{latency:diffEqua}-\eqref{latency:diffEquc}
are known as the forward Kolmogorov equations\cite{Kleinrock1975}.
We illustrate the state transition relationships with one confirmation
in Fig. \ref{fig:Statespacediagram}.

In order to present the forward Kolmogorov equations in a compact
form, we rearrange the two-dimension states $\left\{ w_{i,j}\right\} $
by a particular order as shown in Fig. \ref{fig:order}, captured
by the probability vector:
\[
{\bf w}=\left[\begin{array}{c|cc|ccc|c}
w_{0,0} & w_{1,0} & w_{0,1} & w_{2,0} & w_{1,1} & w_{0,2} & \cdots\end{array}\right]^{T}.
\]
 Now the forward Kolmogorov equations can be rewritten in a matrix
form:
\begin{equation}
{\bf Q}{\bf w}={\bf 0},\label{latency:steadyequation}
\end{equation}
where ${\bf Q}$ is known as the infinitesimal generator, or transition
rate matrix:

\begin{align}
{\bf Q} & =\left[\begin{array}{c|cc|ccc|c}
-\lambda^{a} &  & \lambda_{1}^{c} &  &  &  & \cdots\\
\hline \lambda^{a} & -\left(\lambda^{a}+\lambda^{b}\right) &  &  & \lambda_{1}^{c} &  & \cdots\\
 & \lambda^{b} & -\left(\lambda^{a}+\lambda_{1}^{c}\right) &  &  & \lambda_{2}^{c} & \cdots\\
\hline  & \lambda^{a} &  & -\left(\lambda^{a}+\lambda^{b}\right) &  &  & \cdots\\
 &  & \lambda^{a} &  & -\left(\lambda^{a}+\lambda^{b}+\lambda_{1}^{c}\right) &  & \cdots\\
 &  &  & \lambda^{b} & \lambda^{b} & -\left(\lambda^{a}+\lambda_{2}^{c}\right) & \cdots\\
\hline \vdots & \vdots & \vdots & \vdots & \vdots & \vdots & \ddots
\end{array}\right].
\end{align}
The entry in ${\bf Q}$ equals to the corresponding transition rate
given by $\frac{d}{dh}\text{Pr}\left\{ X(h)=E|X(0)=E^{\prime}\right\} $
only depending on the B-RAN configuration tuple $\Phi=\left\{ \lambda^{a},\lambda^{b},\lambda^{c},s\right\} $.
We can numerically solve the matrix equation by combining with the
sum probability condition of ${\bf 1}^{T}{\bf w}=1$, i.e.,
\begin{equation}
\left[\begin{array}{c}
{\bf Q}\\
{\bf 1}^{T}
\end{array}\right]{\bf w}=\left[\begin{array}{c}
{\bf 0}\\
1
\end{array}\right].\label{latency:matrix}
\end{equation}
From \eqref{latency:matrix}, the steady-state distribution ${\bf w}(\Phi)$
can be expressed as an implicit function of $\Phi$. Note that the
waiting space of B-RAN has no maximum limit. The number of states,
i.e., the dimensions of the vector ${\bf w}$, should be infinite.
In numerical calculations, we can use the solution with large enough
but finite dimensions to approximate the infinite-dimension one. However,
in practice the number of UEs in a tract cannot be infinite, either.
The aggressive load $\lambda^{a}$ is required to be less than $\lambda^{c}$
by the stable condition.

We can obtain the steady-state distribution of B-RAN via \eqref{latency:matrix}.
Meanwhile, we use our self-built prototype to measure the sojourn
time of each state and estimate the probability of a state. The results
show that analytical steady distributions are highly consistent with
experimental outcomes, thereby validating our established queuing
model. We illustrate the steady-state distributions of B-RAN with
different $T^{b}$ and different traffic intensities $\rho=\frac{\lambda^{a}}{s\lambda^{c}}$
in Fig. \ref{fig:distributionSamples}. The low, medium and high traffic
intensities $\rho=\frac{\lambda^{a}}{s\lambda^{c}}$ are set to 0.1,
0.4 and 0.7, respectively. Each node represents $E(i,j)$ in the figures.
The results show that, on the one hand, under a higher traffic intensity,
there are more confirmed requests waiting for service. On the other
hand, when the block time $T^{b}$ becomes larger, more requests wait
to be assembled into a block.

\begin{figure}
\begin{raggedright} \centering\subfigure[]{ \includegraphics[width=0.3\textwidth]{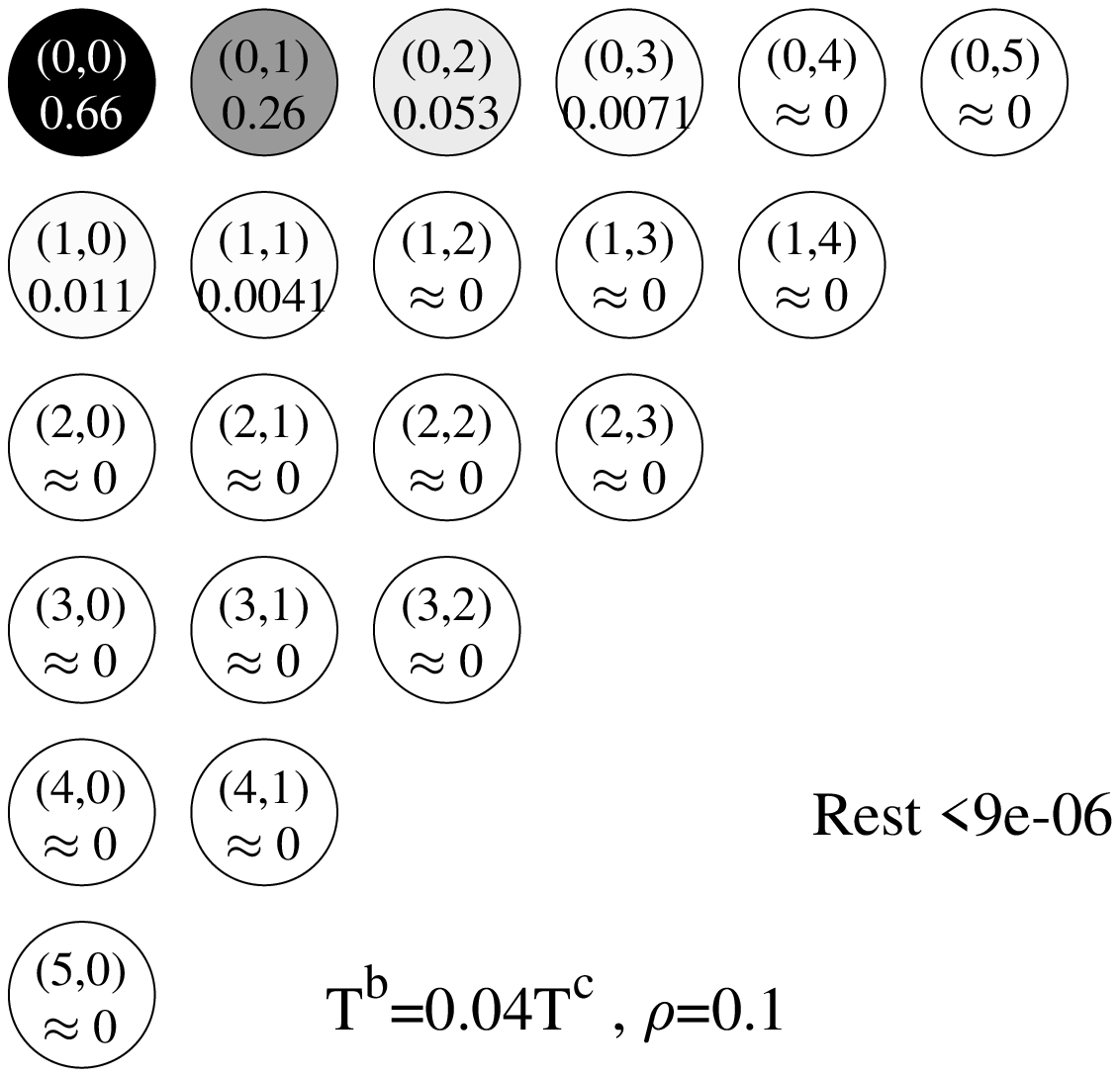}}
\hfill{}\subfigure[]{ \includegraphics[width=0.3\textwidth]{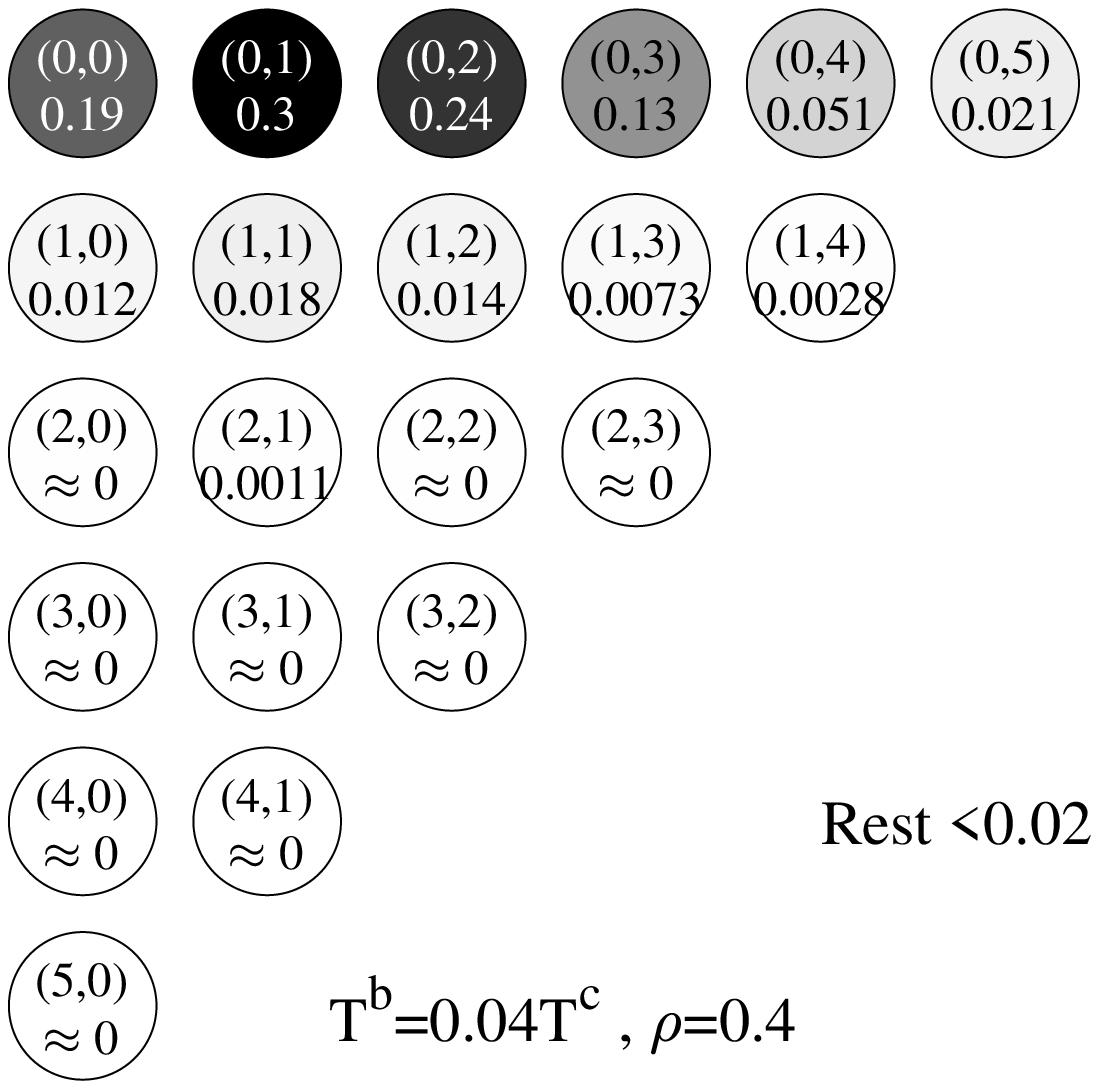}}\hfill{}\subfigure[]{
\includegraphics[width=0.3\textwidth]{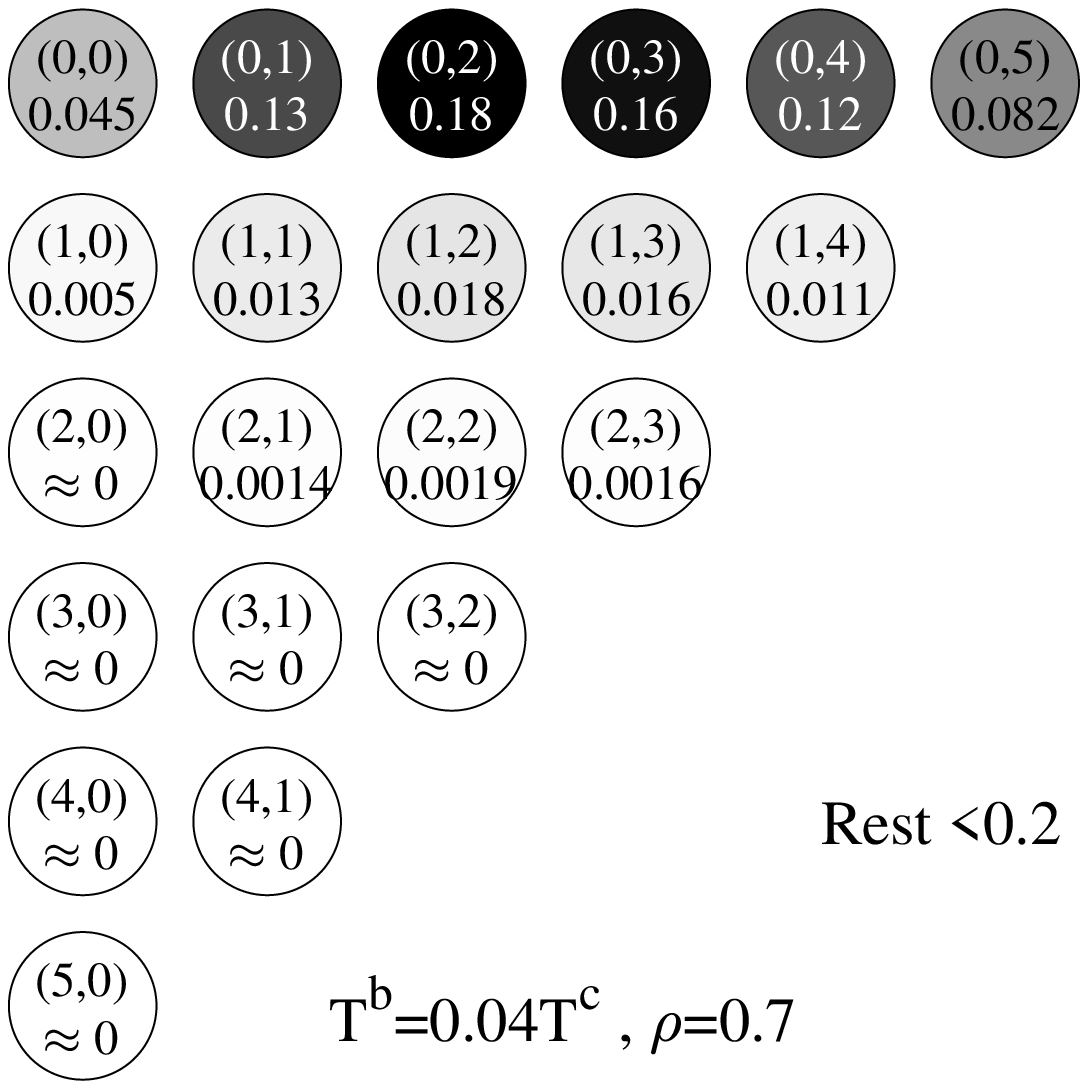}}
\hfill{} \subfigure[]{ \includegraphics[width=0.3\textwidth]{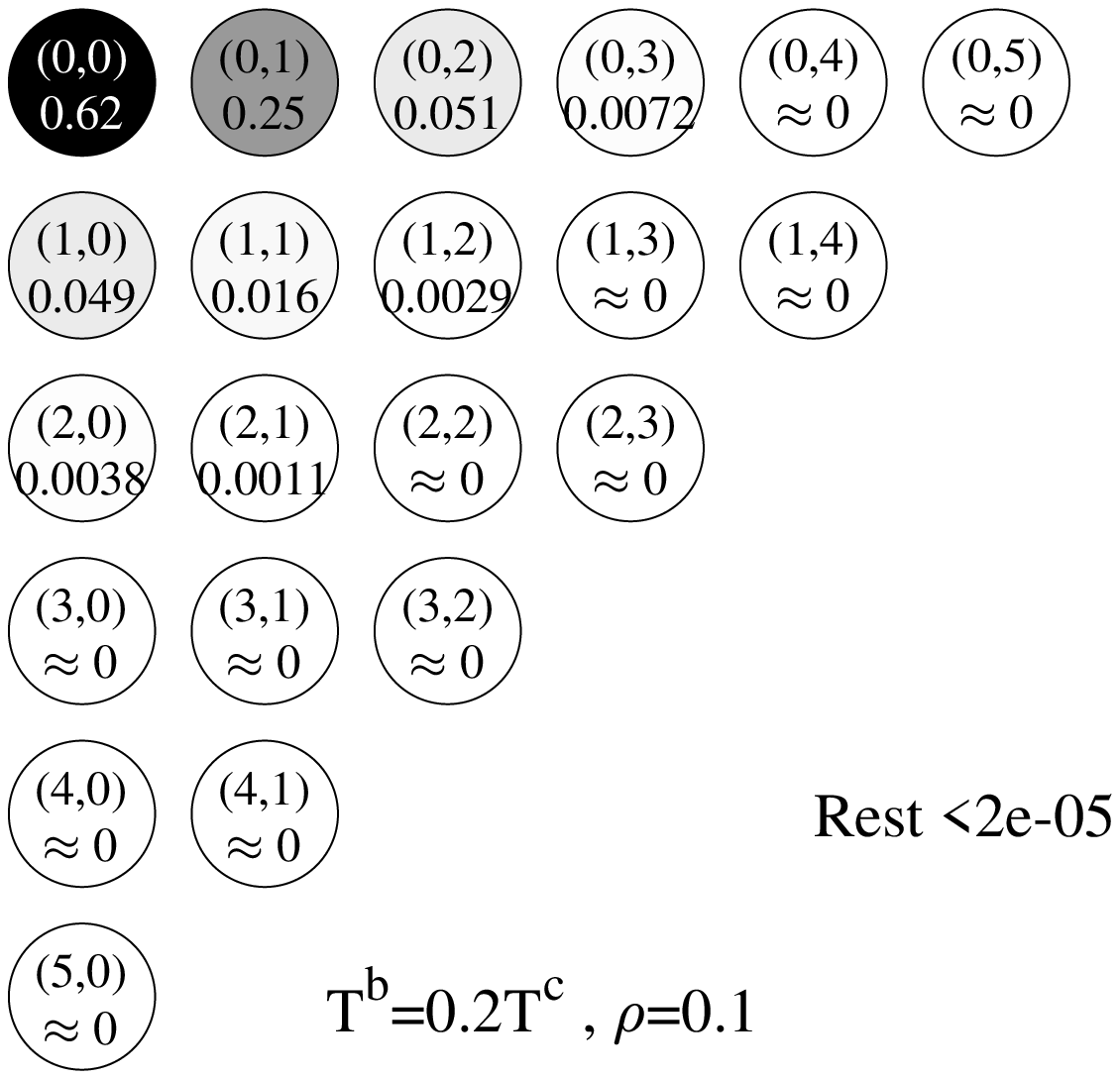}}
\hfill{}\subfigure[]{ \includegraphics[width=0.3\textwidth]{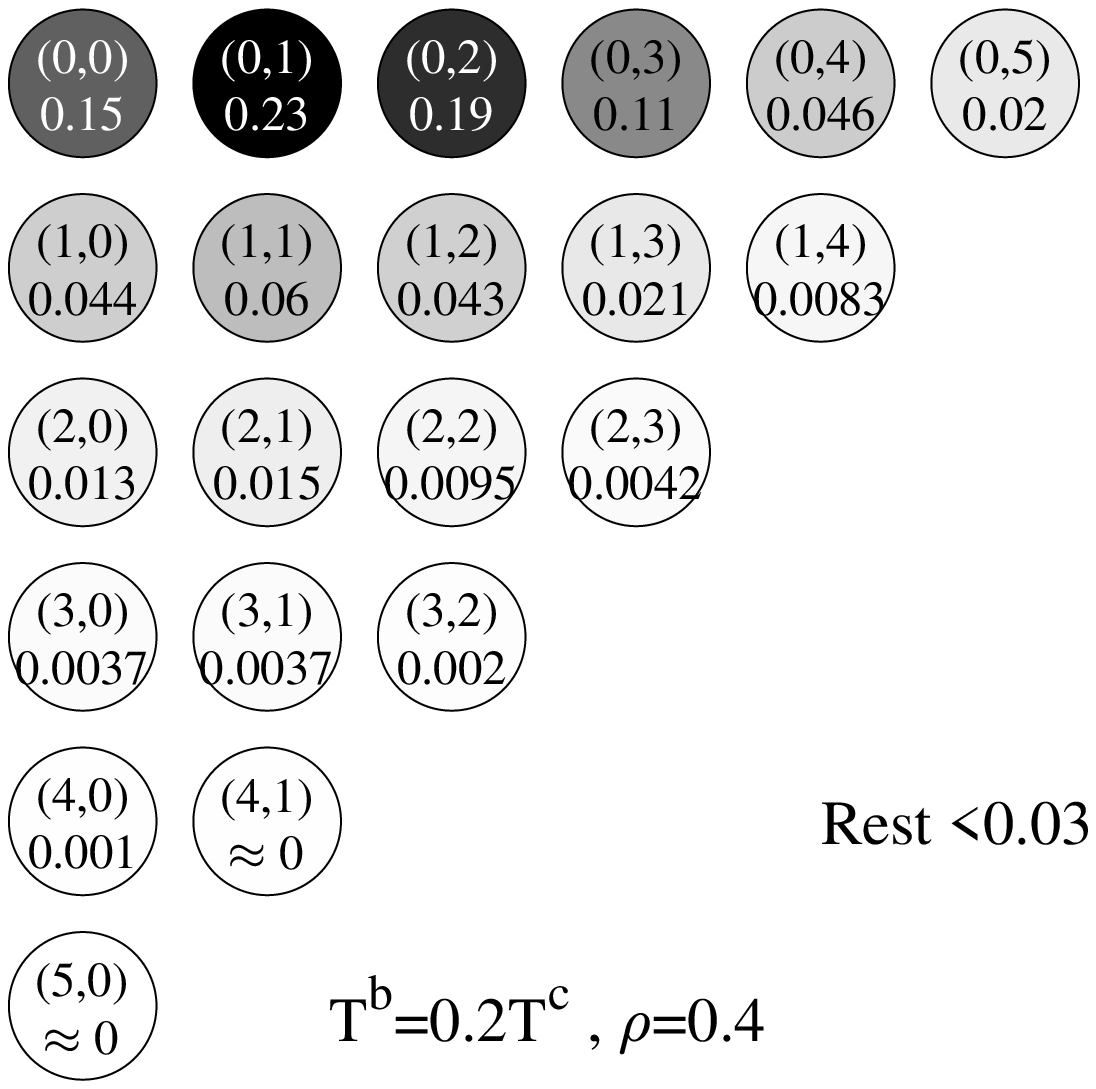}}\hfill{}\subfigure[]{
\includegraphics[width=0.3\textwidth]{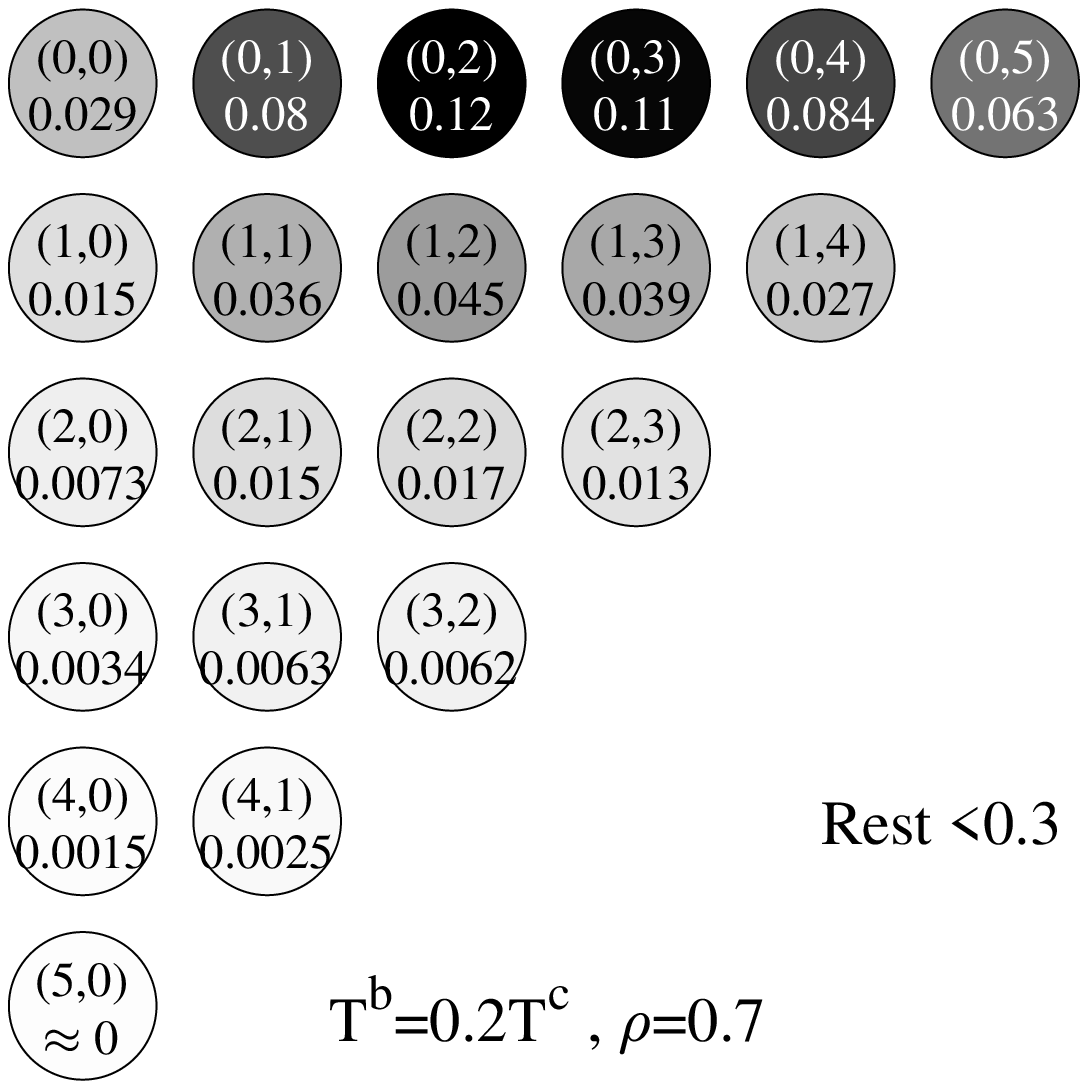}}

\end{raggedright} \centering {}\caption{The distribution of steady states under different traffic intensities
and block time with $s=4$ links. (a) $T^{b}=0.04T^{c}$ under low
traffic intensity $\rho=0.1$. (b) $T^{b}=0.04T^{c}$ under medium
traffic intensity $\rho=0.4$. (c) $T^{b}=0.04T^{c}$ under high traffic
intensity $\rho=0.7$. (d) $T^{b}=0.2T^{c}$ under low traffic intensity
$\rho=0.1$. (e) $T^{b}=0.2T^{c}$ under medium traffic intensity
$\rho=0.4$. (f) $T^{b}=0.2T^{c}$ under high traffic intensity $\rho=0.7$.
\label{fig:distributionSamples}}
\vspace{-0.6cm}
\end{figure}

To analyze the average latency in B-RAN, we consider the limiting
distribution ${\bf w}(\Phi)$, from which we can obtain the average
number of waiting requests $\mathsf{N}\left(\Phi\right)$ in B-RAN:
\begin{align}
\mathbb{E}\left\{ \mathsf{N}\left(\Phi\right)\right\}  & =\sum_{i,j}\left(i+j\right)\cdot w_{i,j}\left(\Phi\right).
\end{align}
We can apply the Little's Law\cite{Cooper1972} as a bridge to connect
the expected queue length and the average latency. The Little\textquoteright s
Law states that, in a stable system, the average number of items is
equal to the arrival rate multiplied by the average time an item spends
in the system. Hence, the expected sojourn time $\mathsf{L}_{s}\left(N,\Phi\right)$
is the sum of waiting time and service time (all the four stages of
the workflow), and the expression of expected sojourn time in the
one-confirmation case is given by:

\begin{equation}
\mathsf{L}_{s}\left(N=1,\Phi\right)=\mathbb{E}\left\{ \mathsf{N}\left(\Phi\right)\right\} /\lambda^{a}=T^{a}\sum_{i,j}\left(i+j\right)w_{i,j}\left(\Phi\right).
\end{equation}
According to Section II, the expected service time is $T^{c}$. Thus,
the average latency $\mathsf{L}\left(N,\Phi\right)$ is defined as
the expected waiting time (the first three stages). In the one-confirmation
case, $\mathsf{L}\left(N,\Phi\right)$ can be written as\footnote{Another equivalent expression of expected latency is $\mathsf{L}\left(N=1,\Phi\right)=T^{a}\sum_{i,j}\left(i+\left(j-1\right)^{+}\right)w_{i,j}\left(\Phi\right)$,
where $\left(\cdot\right)^{+}=\max\left\{ \cdot,0\right\} $.}
\begin{equation}
\mathsf{L}\left(N=1,\Phi\right)=T^{a}\sum_{i,j}\left(i+j\right)w_{i,j}\left(\Phi\right)-T^{c}.\label{latency:1cmf}
\end{equation}
\eqref{latency:1cmf} gives the average latency with one confirmation
in terms of the limiting distribution in \eqref{latency:steadyequation}.

However, \eqref{latency:1cmf} only applies the one-confirmation special
case. Our goal is to investigate the general $N$-confirmation problem.
It is difficult to directly analyze a queue with $\left(N+1\right)$-dimensional
state space, mainly due to the excessively large number variables
in solving the equilibrium equations \eqref{latency:steadyequation}.
Recall that we are more interested in the average system latency.
 A request, once assembled into a block, is required to wait for
$N-1$ confirmations after the very first confirmation of assembling
into a block. This waiting period is an additional stage compared
to the one-confirmation case. The additional $N-1$ confirmations
correspond to extra waiting time of $N-1$ independent block time
$U_{n}^{b}$. Hence, the average access latency is given by
\begin{align}
\mathsf{L}\left(N,\Phi\right) & =\mathsf{L}\left(1,\Phi\right)+\mathbb{E}\left\{ \sum_{n=2}^{N}U_{n}^{b}\right\} \nonumber \\
 & =T^{a}\sum_{i,j}\left(i+j\right)w_{i,j}\left(\Phi\right)+T^{b}\left(N-1\right)-T^{c}.\label{latency:expression}
\end{align}
  Despite the implicit expression of $w_{i,j}\left(\Phi\right)$,
one can see that  latency $\mathsf{L}\left(N,\Phi\right)$ grows linearly
with the number of confirmations $N$. Each extra confirmation leads
to $T^{b}$ longer waiting time on average, and fewer confirmations
will effectively reduce service latency. Note that we cannot simply
conclude that  latency is linear in $T^{b}$, because $T^{b}$ will
influence the limiting distribution $w_{i,j}\left(\Phi\right)$, which
in turn also affects latency. For large $N$, however, we can conclude
that  latency is quasi-linear in $T^{b}$.\textbf{}

\begin{figure}
\centering %
\begin{minipage}[t]{0.47\linewidth}%
\begin{raggedright} \centering\includegraphics[width=0.99\textwidth]{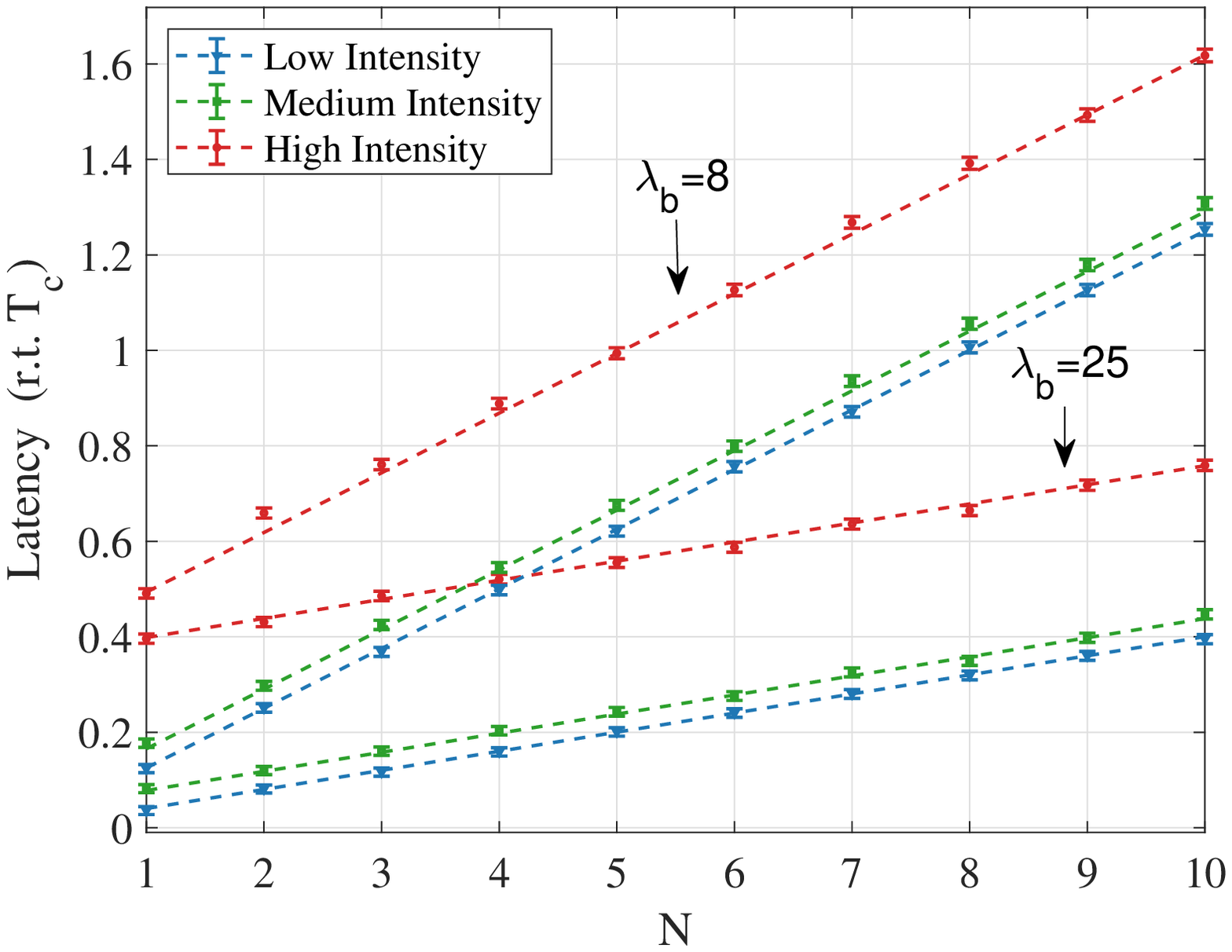}\end{raggedright}
\centering{}\caption{The analytical and experimental latency for different $N$ and $\lambda^{b}$
with $s=4$.\label{fig:latency-tbn}}
\end{minipage}\hfill{}%
\begin{minipage}[t]{0.47\linewidth}%
\begin{raggedright} \centering\includegraphics[width=0.99\textwidth]{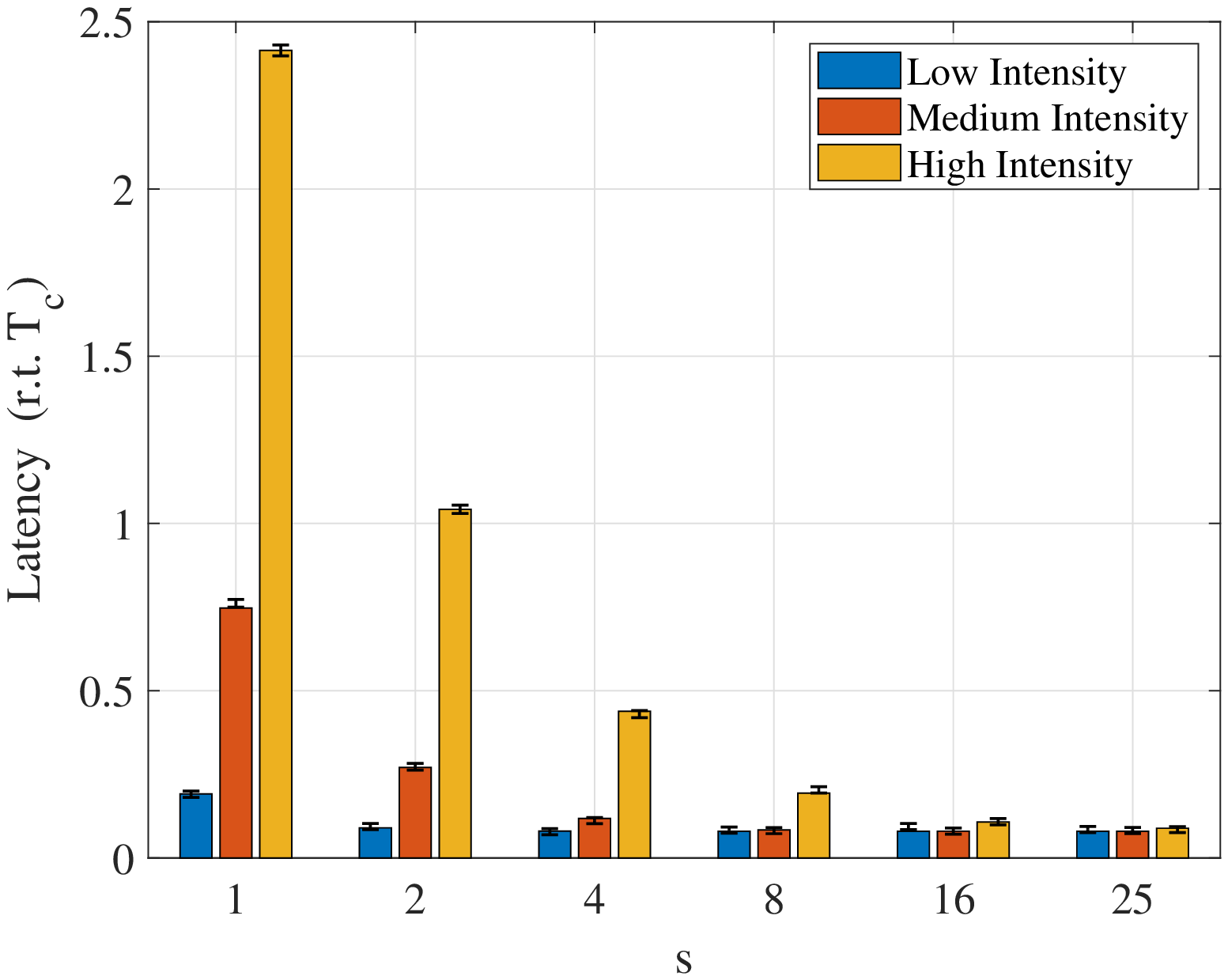}\end{raggedright}
\centering{}\caption{The impact of the maximal access channels $s$ on service latency
($N=2$ and $\lambda^{b}=25$).\label{fig:latency-s}}
\end{minipage}\vspace{-0.6cm}
\end{figure}
In \textbf{Prototype Verification B}, we compare the analytical results
of \eqref{latency:expression} with the outcomes from our B-RAN prototype
under different settings of $N$ and $\lambda^{b}$. Fig. \ref{fig:latency-tbn}
shows both the average and the 95\% confidence interval of latency
under different traffic intensities. In the figure, service latency
is presented relative to (r.t.) the service time $T_{c}$.  First,
the experimental results are highly consistent with the analysis of
\eqref{latency:expression}. Basically, higher traffic intensity will
lead to longer service latency. Furthermore, the results show a strong
linear relationship between latency $\mathsf{L}\left(N,\Phi\right)$
and confirmation number $N$, for which the linear slope is close
to $T^{b}$. This conclusion holds for different traffic intensities
and is a simple corollary of \eqref{latency:expression}: each extra
confirmation leads to exactly $T^{b}$ longer waiting time on average.

Moreover, we can assess the impact of $s$ on the average latency
by Fig. \ref{fig:latency-s}. Given $N=2$ and $\lambda^{b}=25$,
as the number of access links $s$ increases, the average service
latency $\mathsf{L}\left(N,\Phi\right)$ becomes shorter, especially
for the high-intensity case. This result implies that a more extensively
formed B-RAN is less likely to be congested under a constant traffic
$\rho=\frac{\lambda^{a}}{s\lambda^{c}}$. B-RAN through inter-operative
network integration and coordination can benefit from the economy
of scale by delivering shorter and jointly becomes more valuable.
This remarkable insight strongly motivates the future development
of B-RAN.

\subsection{Bounds on Latency}

Although we have an accurate queuing model $\left\{ X(t),\thinspace t\ge0\right\} $
for B-RAN, the relationship between latency in \eqref{latency:expression}
and B-RAN parameters $\Phi$ remains unspecified. Here we reveal the
impact of key parameters on the B-RAN latency by deriving tight bounds.

We first investigate the upper bound of latency by introducing an
extra constraint that each block can carry at most one request. The
ultra-small block size constraint largely affects network throughput
but can effectively simplify the queuing process. For the one-confirmation
case, requests are separately organized into different blocks. Requests
and blocks arrive as a Poisson Process with rate $\lambda^{a}$ and
$\lambda^{b}$, respectively. Hence, the assembling process becomes
an M/M/1 queue with arrival rate $\lambda^{a}$ and service rate $\lambda^{b}$.
After being assembled into blocks, requests get confirmed and wait
for service, effectively forming an M/M/$s$ queue.  Hence, by introduced
the block size limit, the whole process is divided into two queues
in tandem and can be analyzed separately.

Given arrival rate $\lambda$ and service rate $\mu$, the average
waiting time of an M/M/1 queue is $\frac{1}{\mu-\lambda}$\cite{Cooper1972}.
However, the latency of an M/M/$s$ queue is relatively complex, captured
by a known lemma in Chapter 3.4 of \cite{Cooper1972} as follows:
\begin{lem}
Given arrival rate $\lambda$ and service rate $\mu$, the average
waiting time of an M/M/$s$ system is given by $\frac{C(s,\lambda/\mu)}{s\mu-\lambda}+\frac{1}{\mu}$,
where $C(s,\lambda/\mu)$ is the Erlang C formula expressed as
\[
C(s,\lambda/\mu)={\displaystyle \frac{\frac{\left(\lambda/\mu\right)^{s}}{s!}\frac{s}{s-\lambda/\mu}}{\frac{\left(\lambda/\mu\right)^{s}}{s!}\frac{s}{s-\lambda/\mu}+\sum\limits _{i=0}^{s-1}\frac{\left(\lambda/\mu\right)^{i}}{i!}}.}
\]
\end{lem}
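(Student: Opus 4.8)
The plan is to derive the stationary waiting-time formula for the M/M/$s$ queue from its birth–death structure, which is the standard route in queueing theory. First I would set up the state space where state $n$ counts the number of customers in the system, with birth rate $\lambda$ in every state and death rate $\min(n,s)\mu$, exactly mirroring the service-completion rate $\lambda_j^c=\min(j,s)\lambda^c$ established in \eqref{eq:lambdacj}. Writing $a=\lambda/\mu$ and $\rho=a/s$, the global balance (detailed balance for a birth–death chain) equations $\lambda p_{n-1}=\min(n,s)\mu\,p_n$ give the stationary probabilities
\begin{equation}
p_n=\begin{cases}\dfrac{a^n}{n!}\,p_0, & 0\le n\le s,\\[2mm]\dfrac{a^n}{s!\,s^{\,n-s}}\,p_0, & n\ge s,\end{cases}
\end{equation}
with $p_0$ fixed by normalization. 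This step is routine but is the backbone of everything that follows.

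Next I would compute the probability that an arriving customer must wait, i.e. that all $s$ servers are busy. Summing the geometric tail $\sum_{n\ge s}p_n$ and using PASTA (Poisson Arrivals See Time Averages, valid since arrivals are Poisson) to identify the time-average with the arrival-seen probability, I would obtain precisely the Erlang C expression $C(s,a)=\sum_{n\ge s}p_n$, which after simplifying the tail sum $\sum_{n\ge s}\rho^{\,n-s}=\tfrac{1}{1-\rho}=\tfrac{s}{s-a}$ matches the stated formula
\[
C(s,\lambda/\mu)=\frac{\frac{(\lambda/\mu)^{s}}{s!}\frac{s}{s-\lambda/\mu}}{\frac{(\lambda/\mu)^{s}}{s!}\frac{s}{s-\lambda/\mu}+\sum_{i=0}^{s-1}\frac{(\lambda/\mu)^{i}}{i!}}.
\]
The denominator here is exactly $1/p_0$, so recognizing $C(s,a)=p_s\cdot\frac{s}{s-a}$ is the clean way to see the identity.

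To get the mean waiting time I would reason on the delay experienced by a customer who does find all servers busy. Conditioned on waiting, the queue of busy servers clears at rate $s\mu$, and because the exponential service times are memoryless, the time until the first of $s$ busy servers frees up is exponential with rate $s\mu$; the remaining wait is the sum of such independent exponentials, so the conditional mean wait is $\frac{1}{s\mu-\lambda}$ (this follows most cleanly by noting the number waiting ahead is itself geometric with ratio $\rho$, giving a single effective exponential of rate $s\mu-\lambda$). Multiplying by the wait probability $C(s,a)$ gives mean queueing delay $\frac{C(s,\lambda/\mu)}{s\mu-\lambda}$, and adding the mean service time $\frac{1}{\mu}$ yields the total average time in system as claimed. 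Alternatively one can compute $\mathbb{E}\{\mathsf N\}=\sum_n n\,p_n$ directly and invoke Little's Law $\mathsf L=\mathbb{E}\{\mathsf N\}/\lambda$, as the excerpt already does for B-RAN; both paths converge.

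The main obstacle is purely bookkeeping: carefully summing the two geometric/polynomial pieces of the stationary distribution and simplifying them into the compact Erlang C quotient without arithmetic slips, particularly handling the tail sum convergence which requires the stability condition $a<s$, i.e. $\lambda<s\mu$. Since the paper cites this as a known result from Chapter 3.4 of \cite{Cooper1972}, the cleanest exposition is to state the birth–death balance, write down $p_n$, identify $C(s,a)$ as the busy-all-servers probability via PASTA, and then assemble the waiting time by the memoryless clearing-rate argument, citing \cite{Cooper1972} for the detailed algebra rather than reproducing it in full.
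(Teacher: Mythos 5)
Your proposal is correct, but note that the paper itself contains no proof of this lemma: it is imported verbatim as a known result from Chapter 3.4 of \cite{Cooper1972}, so there is nothing internal to compare against. What you have written is essentially the standard textbook derivation that the citation stands in for, and every step checks out: the birth--death balance equations give the two-piece stationary distribution $p_n$; the tail sum $\sum_{n\ge s}\rho^{n-s}=\frac{s}{s-a}$ (valid under the stability condition $\lambda<s\mu$, which you correctly flag) turns $\sum_{n\ge s}p_n$ into the stated Erlang C quotient with $1/p_0$ as the denominator; PASTA legitimately identifies this time-average with the probability an arrival must wait; and the geometric-mixture-of-Erlangs argument (number ahead geometric with ratio $\rho$, each clearance exponential at rate $s\mu$) correctly collapses the conditional delay to a single exponential of rate $s\mu(1-\rho)=s\mu-\lambda$, giving mean queueing delay $\frac{C(s,a)}{s\mu-\lambda}$. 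One point you handle well that deserves emphasis: the lemma's phrase ``average waiting time'' is a slight abuse of terminology, since the formula includes the mean service time $\frac{1}{\mu}$ and is therefore the mean sojourn time (time in system); your derivation makes this explicit, which is consistent with how the paper uses the bound in \eqref{latency:Lub}, where the four workflow stages including service are aggregated. Your suggestion to cite \cite{Cooper1972} for the algebra rather than reproduce it is exactly what the authors did.
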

In the $N$-confirmation case, we should take the additional confirmation
latency $\frac{N-1}{\lambda^{b}}$ into account.  The overall latency
experienced by a request aggregates the four stages shown in Fig.
\ref{fig:workflow}. Consequently, under the block size limit, the
system latency is upper bounded by
\begin{equation}
\mathsf{L}_{ub}\left(N,\Phi\right)=\frac{1}{\lambda^{b}-\lambda^{a}}+\frac{C(s,\lambda^{a}/\lambda^{c})}{s\lambda^{c}-\lambda^{a}}+\frac{N-1}{\lambda^{b}}.\label{latency:Lub}
\end{equation}
In practice, the block size limit is always much greater than one,
and apparently,  the obtained bound $\mathsf{L}_{ub}\left(N,\Phi\right)$
is much longer than the actual latency. Note that $\mathsf{L}_{ub}\left(N,\Phi\right)$
is available if only $\lambda^{b}>\lambda^{a}$. Otherwise, the length
of the first stage (i.e., the M/M/1 queue) will grow without limits
and $\mathsf{L}_{ub}\left(N,\Phi\right)$ would tend to infinity.

To derive the lower bound of  latency, consider that blocks are rapidly
generated such that a request is assembled into a block immediately
upon its arrival. As a result, the queuing process with one confirmation
is directly reduced to only an M/M/$s$ queue. By considering confirmation
latency, we obtain an lower bound of service latency:
\begin{align}
\mathsf{L}_{lb1}\left(N,\Phi\right) & =\frac{C(s,\lambda^{a}/\lambda^{c})}{s\lambda^{c}-\lambda^{a}}+\frac{N-1}{\lambda^{b}}.\label{latency:Llb1}
\end{align}
The gap between $\mathsf{L}_{lb1}\left(N,\Phi\right)$ and $\mathsf{L}_{ub}\left(N,\Phi\right)$
equals $\frac{1}{\lambda^{b}-\lambda^{a}}$, which is negligible for
$\frac{1}{\lambda^{b}-\lambda^{a}}\rightarrow0$. Therefore, in this
case of $\lambda^{b}\gg\lambda^{a}$ or more accurately $\lambda^{b}-\lambda^{a}\rightarrow\infty$,
we can say both $\mathsf{L}_{lb1}\left(N,\Phi\right)$ and $\mathsf{L}_{ub}\left(N,\Phi\right)$
are quite tight in terms of latency.

Moreover, we can obtain another lower bound by estimating how long
it takes to assemble a request into a block. As shown in Section V-A,
the closed-form distribution of $E\left(i,j\right)$ is intractable.
However, it is possible to deal with the distribution of $i$ in state
$E\left(i,j\right)$. We can sketch the state space diagram of  the
number of pending requests $i$, and then it forms a typical M/M/$\infty$
queue with arrival rate $\lambda^{a}$ and service rate $\lambda^{b}$.
By formulating the forward Kolmogorov equations, we can obtain the
limiting distribution as
\[
\text{Pr}\left\{ X(t)=E(i,\cdot)\right\} =\frac{\lambda^{b}}{\lambda^{a}+\lambda^{b}}\cdot\left(\frac{\lambda^{a}}{\lambda^{a}+\lambda^{b}}\right)^{i}.
\]
 According to the Little's Law, the mean waiting time for assembling
is $1/\lambda^{b}$. By simply neglecting the waiting time for service
(Phase 3), we can attain another lower bound of latency:
\begin{equation}
\mathsf{L}_{lb2}\left(N,\Phi\right)=\frac{1}{\lambda^{b}}+\frac{N-1}{\lambda^{b}}=\frac{N}{\lambda^{b}}.\label{latency:Llb2}
\end{equation}
Since the waiting time for service (phase 3) to devise $\mathsf{L}_{lb2}\left(N,\Phi\right)$
is neglected, the second lower bound can be regarded as an extreme
case under an ultra-light traffic intensity. This is the case when
the request arrival rate $\lambda^{a}$ is extremely small such that
there always exist idle access links, in which case the confirmed
requests no longer have to wait for service. Thus, $\mathsf{L}_{lb2}\left(N,\Phi\right)$
becomes tight as $\lambda^{a}\rightarrow0$. Observe that the arrival
rate $\lambda^{a}$ does not appear in the expression of $\mathsf{L}_{lb2}\left(N,\Phi\right)$,
indicating that $\mathsf{L}_{lb2}\left(N,\Phi\right)$ is a unified
bound for different arrival rates and could be a bit loose in some
cases.

In \textbf{Prototype Verification C}, we would like to evaluate the
accuracy of the queuing model and the bounds through experimental
results. We set the maximum number of access links to $s=25$ and
obtain average and 95\% confidence interval of service latency for
different system setups $\Phi$, as shown in Fig. \ref{fig:latencyBounds}.
From the results in Fig. \ref{fig:latencyBounds}, we see that the
queuing model established in our work accurately predicted the experimental
results. When the traffic intensity is less than $\frac{1}{2}$, the
average latency is dominated by the confirmation delay ($NT^{b}$).
When the traffic intensity becomes larger, the network congest occurs
and leads to a much longer delay. We also provided the upper and lower
bounds in Fig. \ref{fig:latencyBounds}. It is clear that both upper
and lower bounds can help estimate the range of latency. At low traffic
intensity, the average service latency is close to the lower bound
$\mathsf{L}_{lb2}\left(N,\Phi\right)$, where the service latency
is dominated by the confirmation delay, as discussed in Section V-B.
From Fig. \ref{fig:latencyBounds}(b), since $\lambda^{b}$ is much
larger than $\lambda^{a}$, we can see that $\mathsf{L}_{ub}\left(N,\Phi\right)$
and $\mathsf{L}_{lb1}\left(N,\Phi\right)$ become much tighter.

\begin{figure}
\begin{raggedright} \centering\subfigure[]{\includegraphics[width=0.45\textwidth]{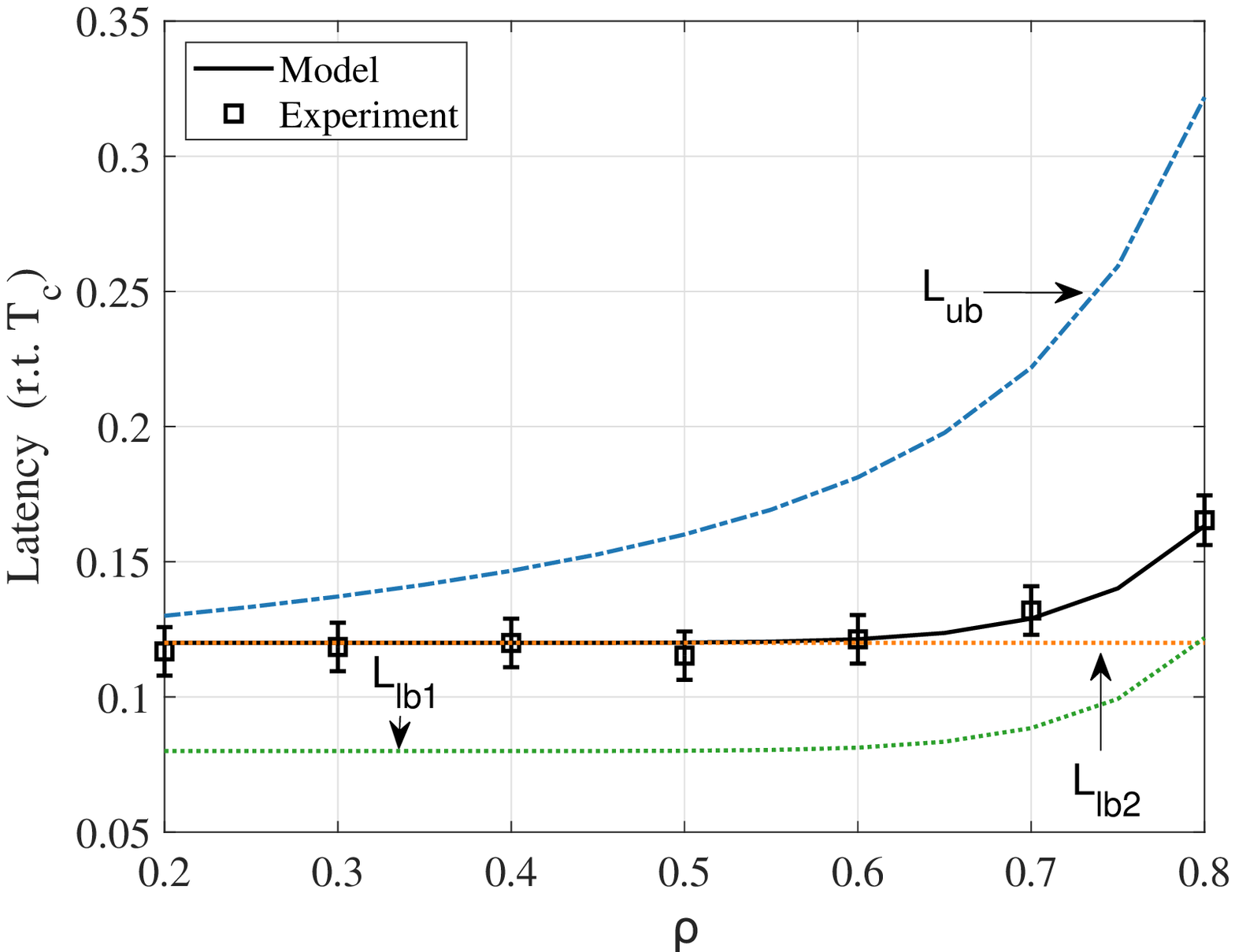}}\hfill{}
\subfigure[]{\includegraphics[width=0.45\textwidth]{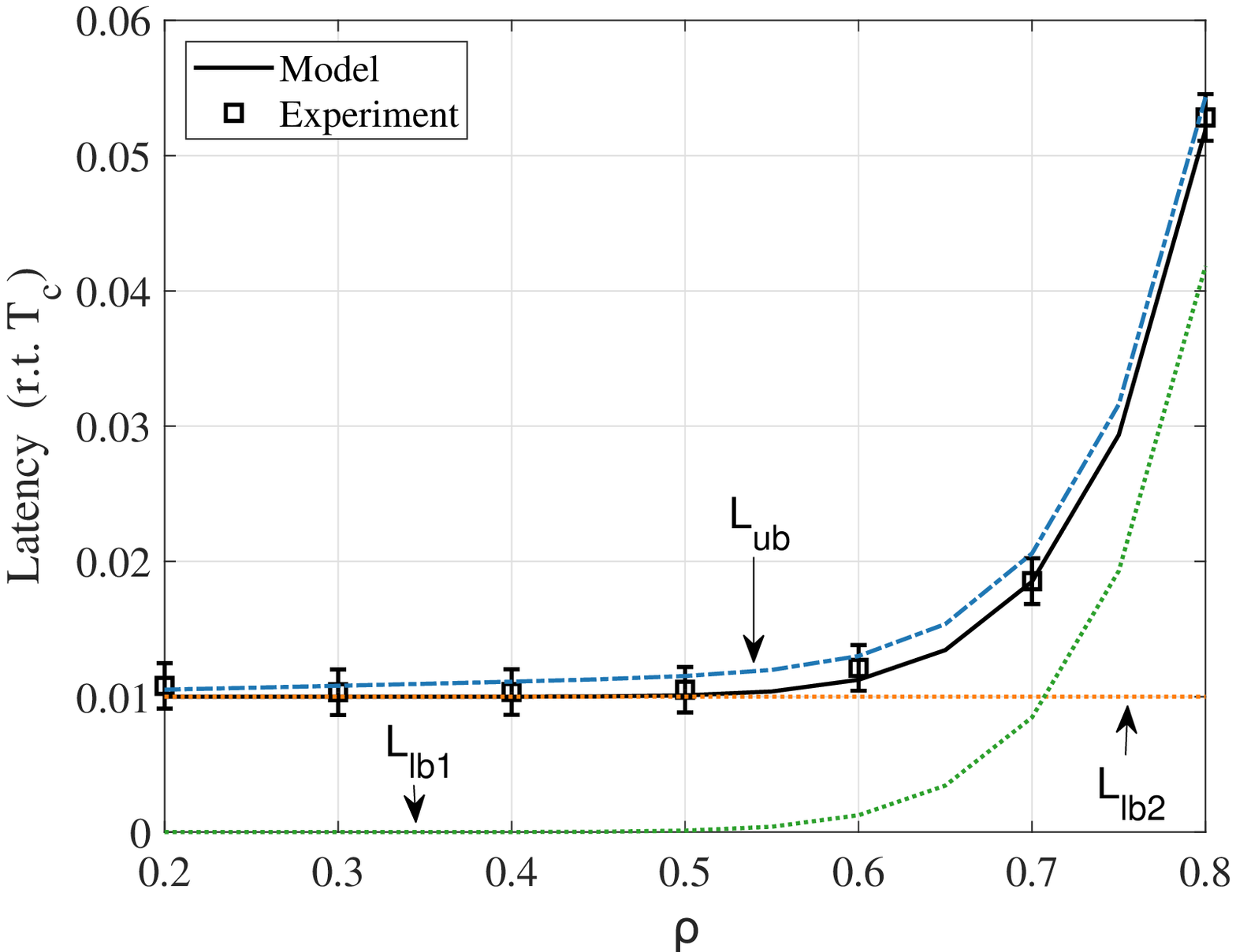}}\end{raggedright}
\centering{}\caption{The analytical and experimental latency with upper and lower bounds
under different traffic intensities $\rho$ ($s=25$). (a) $\lambda^{b}=25$
and $N=3$. (b) $\lambda^{b}=100$ and $N=1$. \label{fig:latencyBounds}}
\vspace{-0.6cm}
\end{figure}

\section{Security Considerations}

\subsection{Alternative History Attack}

Since B-RAN is a distributed radio access network enabled by blockchains,
which do elicit certain security concerns and risks. There can be
a variety of vulnerabilities and security implications in blockchain
systems. The alternative history attack, also referred to as ``double
spending attack'', is a fundamental and inherent risk of distributed
systems\cite{Tschorsch2016,Bai2019}.

The alternative history attack implies that the history record in
blockchain could be tampered by malicious miners, which essentially
exposes the vulnerability of blockchain. In alternative history attack,
an attacker privately mines an alternative blockchain fork in which
a fraudulent double spending event (e.g., an access service in B-RAN)
is included. After the network accepts a benign chain, the attacker
releases the fraudulent fork. If the fraudulent fork eventually becomes
longer than the benign one, the attacker can successfully alter a
confirmed history and spend the same coin (credit) twice, which would
drive the blockchain system into a catastrophic inconsistent state.
 For wireless network, altering a confirmed chain in B-RAN may lead
to interference issues when more than one UEs attempt to access the
same channel simultaneously. Hence, alternative history attack cannot
be simply ignored, despite the difference transactions between B-RAN
and cryptocurrencies. In our security considerations, we focus mainly
on the risk of alternative history attacks.

For blockchain to overcome alternative history attack, miners are
always guided to the \textquotedblleft longest\textquotedblright{}
locally known fork, that is, the one involving the highest amount
of computational effort so far. The block generation process is coupled
to some specific capability of miners, e.g., computational power in
PoW and the number of available devices in PoD. Still, alternative
history attacks are possible, and only require adversaries to create
a chain longer than the benign one. Clearly, the higher the number
of confirmations, the lower the probability of a successful alternative
history attack.

\subsection{Probability of Successful Attack}

\begin{figure}
\begin{raggedright} \centering\includegraphics[width=0.8\textwidth]{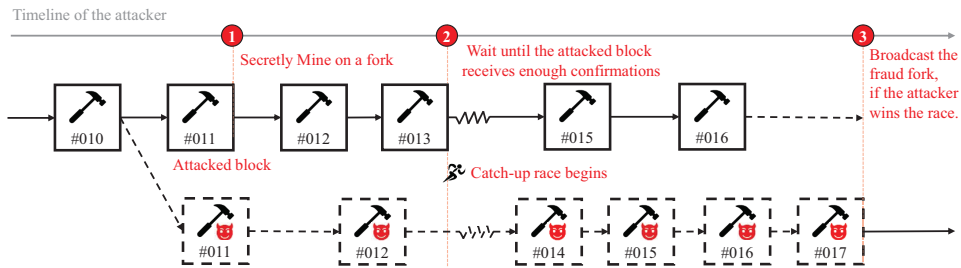}

\end{raggedright} \centering{}\caption{The attacker's scheme in a double spending attack.\label{fig:workflowDSA}}
\vspace{-0.6cm}
\end{figure}

Recall that the benign miners' and an attacker's mining rate are $\lambda^{b}$
and $\beta\lambda^{b}$, respectively, as $\beta$ represents the
attacker's relative hash power.  Rational benign miners always work
on the longest chain for the mining rewards, while the attacker attempts
to generate a fraudulent fork to revise the history for double spending
or other purposes. Before initiating a fraud, the attacker will broadcast
a regular event (e.g., paying for access) and wait for it to be assembled
into a block. That block is the recorded history which the attacker
attempts to revise. As shown in Fig. \ref{fig:workflowDSA}, the
attacker will mount an alternative history attack according to the
following steps:
\begin{enumerate}
\item secretly mine on a fork excluding the attacked block;
\item wait until the attacked block receives enough confirmations (say,
$N$ blocks);
\item broadcast the respective blocks as soon as the fraudulent fork is
longer than the benign fork.
\end{enumerate}
Note that the attacker could always mine on the fraudulent chain,
despiting being many blocks behind. In practice, the attacker usually
will stop working on the fraudulent chain and give up if it is, e.g.,
$N_{g}$ blocks, behind.

Given the attacker's relative mining rate $\beta$ and its strategy
$N_{g}$, we would like to determine the probability of a successful
alternative history attack, donated by $\mathsf{S}\left\{ N,\beta,N_{g}\right\} $,
to reflect how secure B-RAN is.
\begin{thm}
\label{thm:probattack}Given the attacker's relative mining rate $\beta$
and the give-up strategy $N_{g}$, the probability of a successful
alternative history attack is
\begin{equation}
\mathsf{S}\left(N,\beta,N_{g}\right)=\begin{cases}
1-\sum_{n=0}^{N}\binom{n+N-1}{n}\left(\frac{1}{1+\beta}\right)^{N}\left(\frac{\beta}{1+\beta}\right)^{n}\left(\frac{1-\beta^{N-n+1}}{1-\beta^{N_{g}+1}}\right) & \text{\text{if }}\beta\neq1\\
1-\sum_{n=0}^{N}\frac{1}{2^{N+n}}\binom{n+N-1}{n}\left(\frac{N-n+1}{N_{g}+1}\right) & \text{\text{if }}\beta=1.
\end{cases}\label{security:probfinity}
\end{equation}
\end{thm}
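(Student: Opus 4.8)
The plan is to reduce the continuous-time competition between the two independent Poisson mining processes to a discrete-time biased random walk, and then to evaluate the attack outcome in two phases: the confirmation phase and the subsequent catch-up race. First I would invoke the superposition and thinning properties of Poisson processes (the same competing-exponentials argument used in the proof of Theorem~\ref{thm:queuemodel}): since the honest miners and the attacker mine independently at rates $\lambda^{b}$ and $\beta\lambda^{b}$, whenever a block appears it is an honest block with probability $p=\frac{1}{1+\beta}$ and an adversarial block with probability $q=\frac{\beta}{1+\beta}$, independently across blocks. This lets me replace the real-time dynamics by the embedded jump chain, in which the only relevant statistic is the height difference between the benign fork and the fraudulent fork.

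Next I would treat the confirmation phase. By the attacker's strategy, the fraud is evaluated only after the benign chain extends by $N$ blocks, so I condition on $n$, the number of adversarial blocks produced before the $N$-th honest block. In the Bernoulli sequence above this is exactly a negative binomial count,
\[
\Pr\{n\}=\binom{n+N-1}{n}\left(\tfrac{1}{1+\beta}\right)^{N}\left(\tfrac{\beta}{1+\beta}\right)^{n},
\]
which supplies the prefactor in \eqref{security:probfinity}. At the end of this phase the honest fork leads by a deficit $D=N-n$. I would then note that if $n\ge N+1$ the fraudulent fork is already strictly longer and the attack has already succeeded, so those terms contribute nothing to the failure probability, and this is precisely why the summation truncates at $n=N$.

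For the catch-up race I would set up a two-barrier gambler's-ruin problem on the deficit $D$, which increases by one with probability $p$ (an honest block) and decreases by one with probability $q$ (an adversarial block). The attack succeeds the instant $D=-1$ (fraudulent fork strictly longer) and is abandoned when $D=N_{g}$ (the give-up rule). Writing $h(N-n)$ for the probability of reaching the give-up barrier first and solving the standard recurrence $a_{k}=p\,a_{k+1}+q\,a_{k-1}$ with ratio $q/p=\beta$ yields
\[
h(N-n)=\frac{1-\beta^{N-n+1}}{1-\beta^{N_{g}+1}},
\]
while the degenerate limit $p=q=\tfrac12$ of the same recurrence gives $\frac{N-n+1}{N_{g}+1}$ when $\beta=1$. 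Summing $\Pr\{n\}\,h(N-n)$ over $n=0,\dots,N$ produces the total failure probability, and $\mathsf{S}=1-(\text{failure})$ delivers both branches of \eqref{security:probfinity}.

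The main obstacle I anticipate is the catch-up analysis: correctly placing the two absorbing barriers, in particular observing that ``strictly longer'' puts the winning barrier at $D=-1$ rather than $D=0$, which shifts the starting point and yields the exponent $N-n+1$ instead of $N-n$; and cleanly extracting the $\beta=1$ branch as the limit of the geometric gambler's-ruin solution. A secondary point needing care is justifying that the confirmation-phase count is exactly negative binomial rather than the Poisson approximation common in the literature, together with the truncation of the sum at $n=N$.
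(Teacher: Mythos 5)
Your proposal is correct and follows essentially the same route as the paper's proof in Appendix A: a negative binomial count of adversarial blocks produced during the $N$ confirmations, followed by a two-barrier gambler's-ruin race with absorbing states at deficit $-1$ (success) and $N_{g}$ (give up), implicitly assuming $N_{g}>N$ just as the paper does. The only cosmetic difference is that you solve the recurrence directly for the failure probability $h$ (with $h(-1)=0$, $h(N_{g})=1$) and truncate the sum at $n=N$ up front, whereas the paper solves for the win probability $P_{n}$ (with $P_{-1}=1$, $P_{N_{g}}=0$) and then rearranges using the normalization of the negative binomial; the two are equivalent.
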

\begin{proof}
Please see Appendix A.
\end{proof}
A more powerful attacker with a larger $\beta$ is more likely to
revise history successfully, while more confirmations can effectively
mitigate the risk of such malicious attacks. However, no matter how
large $N$ is, the alternative history attack is always possible if
the attacker has a non-negative mining rate and sufficient luck. We
note that the attacker's give-up parameter $N_{g}$ has never been
characterized in the existing studies, which, in fact, cannot be simply
ignored. Based on Theorem \ref{thm:probattack}, we further consider
the impact of $N_{g}$ on the success probability.
\begin{cor}
\label{cor:infinity}Given $N$ and $\beta$, the probability of a
successful attack $\mathsf{S}\left(N,\beta,N_{g}\right)$ increases
monotonically with $N_{g}$. Hence, the maximum success probability
is $\mathsf{S}\left(N,\beta\right)=\lim_{N_{g}\rightarrow\infty}\mathsf{S}\left(N,\beta,N_{g}\right)$,
given by
\begin{equation}
\mathsf{S}\left(N,\beta\right)=\begin{cases}
1-\sum_{n=0}^{N}\binom{n+N-1}{n}\left(\frac{1}{1+\beta}\right)^{N}\left(\frac{\beta}{1+\beta}\right)^{n}\left(1-\beta^{N-n+1}\right) & \text{\text{if }}\beta<1\\
1 & \text{\text{if }}\beta\ge1.
\end{cases}\label{Security:probinfinity}
\end{equation}
\end{cor}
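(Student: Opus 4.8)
The plan is to establish the monotonicity first and then extract the limit by termwise passage. I would begin from the closed-form expression \eqref{security:probfinity} in Theorem \ref{thm:probattack} and focus on the $\beta \neq 1$ branch, since the $\beta = 1$ case can be handled either by the same argument applied to its explicit form or by continuity. The key observation is that $N_g$ enters $\mathsf{S}(N,\beta,N_g)$ only through the single common factor $1/(1-\beta^{N_g+1})$ appearing inside the summation; every other quantity in the subtracted sum is independent of $N_g$. Thus I would factor the sum as
\[
\mathsf{S}(N,\beta,N_g) = 1 - \frac{1}{1-\beta^{N_g+1}}\sum_{n=0}^{N}\binom{n+N-1}{n}\left(\frac{1}{1+\beta}\right)^{N}\left(\frac{\beta}{1+\beta}\right)^{n}\left(1-\beta^{N-n+1}\right),
\]
which reduces the whole claim to understanding how the scalar prefactor $1/(1-\beta^{N_g+1})$ behaves in $N_g$, together with the sign of the remaining ($N_g$-free) sum.

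The monotonicity argument then splits naturally by the size of $\beta$. For $\beta < 1$, the quantity $\beta^{N_g+1}$ is positive and strictly decreasing in $N_g$, so $1-\beta^{N_g+1}$ strictly increases toward $1$ and the prefactor $1/(1-\beta^{N_g+1})$ strictly decreases toward $1$; since the residual sum is manifestly nonnegative (each summand has $1-\beta^{N-n+1} \geq 0$ because $n \leq N$ forces $N-n+1 \geq 1$ and $\beta < 1$), subtracting a decreasing nonnegative quantity from $1$ makes $\mathsf{S}$ increasing in $N_g$. For $1 < \beta$, I would note $\beta^{N_g+1} \to \infty$, so the prefactor tends to $0$ from the appropriate side; I must track signs carefully here, since $1-\beta^{N_g+1} < 0$ and the residual sum also changes sign, but the product that is actually subtracted still decreases in magnitude toward the limiting value, yielding monotone increase and a limit of $1$. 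This sign-bookkeeping for $\beta > 1$ is the step I expect to be the main obstacle, because the individual factors are negative and one must verify that their combination behaves monotonically rather than merely that the pieces do.

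Having established monotonicity, the limit $\mathsf{S}(N,\beta) = \lim_{N_g \to \infty}\mathsf{S}(N,\beta,N_g)$ follows by passing to the limit termwise, which is legitimate because the sum over $n$ is \emph{finite} (it runs only from $0$ to $N$), so no interchange-of-limit justification beyond continuity of each summand in its $N_g$-dependent factor is needed. For $\beta < 1$ I would send $\beta^{N_g+1} \to 0$, replacing the factor $(1-\beta^{N-n+1})/(1-\beta^{N_g+1})$ by $1-\beta^{N-n+1}$ and recovering exactly the top branch of \eqref{Security:probinfinity}. For $\beta \geq 1$ the decreasing prefactor forces the entire subtracted sum to $0$, giving $\mathsf{S}(N,\beta)=1$, which matches the intuition that an attacker with at least half the hash power eventually overtakes the honest chain with probability one regardless of $N$. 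The $\beta = 1$ boundary can be confirmed by taking $\beta \uparrow 1$ in the limit and checking agreement with the explicit $\beta = 1$ formula in \eqref{security:probfinity}.
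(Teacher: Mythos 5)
Your proposal is correct, but it takes a genuinely different (and in one respect tighter) route than the paper's own proof. The paper's Appendix B goes back one level in the derivation: it takes the limit $N_{g}\rightarrow\infty$ inside the gambler's-ruin probabilities $P_{n}$ of \eqref{security:Pi}, obtaining $\lim_{N_{g}\rightarrow\infty}P_{n}=\beta^{n+1}$ for $\beta<1$ (and $1$ otherwise), and then substitutes this limiting $P_{n}$ into the attack-probability sum; the monotonicity in $N_{g}$ is not proved there at all, merely asserted to be ``easily obtained by observing the expression.'' You instead stay entirely at the level of the closed form \eqref{security:probfinity} and isolate all $N_{g}$-dependence into the single scalar factor $1/(1-\beta^{N_{g}+1})$ (respectively $1/(N_{g}+1)$ when $\beta=1$), so that both monotonicity and the limit reduce to elementary facts about that scalar together with the sign of the $N_{g}$-free residual sum; since the sum runs only over $n=0,\dots,N$, the termwise passage to the limit needs no further justification, as you note. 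What your factoring buys is an actual proof of the monotonicity claim that the paper leaves as an assertion, with no need to reopen Appendix A. What the paper's route buys is the probabilistic content: $\lim_{N_{g}\rightarrow\infty}P_{n}=\beta^{n+1}$ is the classical statement that a never-give-up attacker starting $n$ blocks behind eventually catches up with probability $\beta^{n+1}$, which is conceptually the heart of the corollary. Finally, the step you flagged as the main obstacle ($\beta>1$) closes cleanly: every factor $1-\beta^{N-n+1}$ with $0\le n\le N$ is negative and the prefactor $1/(1-\beta^{N_{g}+1})$ is negative as well, so the quantity subtracted from $1$ is positive and decreases monotonically to $0$ as $\beta^{N_{g}+1}\rightarrow\infty$, giving $\mathsf{S}\left(N,\beta,N_{g}\right)\uparrow1$ exactly as claimed.
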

\begin{proof}
Please see Appendix B.
\end{proof}
 Corollary \ref{cor:infinity} intuitively indicates that the attacker
can increase the success probability through repeated trials. Hence,
$\mathsf{S}\left(N,\beta\right)$ is the maximum probability of an
alternative history attack. The dominant strategy of a powerful attacker
with $\beta\geq1$ is to never quit. Because $\mathsf{S}\left(N,\beta\right)=1$
for $\beta>1$, the fraud would succeed eventually. This is known
as the ``51\%'' attack or Goldfinger attack\cite{Tschorsch2016}.
For weaker hash power $\beta<1$, although a larger $N_{g}$ can also
increase the success probability, it costs a vast number of resources
and could possibly lead to negative yields. If the fraudulent chain
is hundreds of blocks behind the benign chain, a rational attacker
should give up and mount a new round of attack instead of sticking
to the original one with mounting cost.

\begin{figure}
\centering %
\begin{minipage}[t]{0.47\linewidth}%
\begin{raggedright} \centering\includegraphics[width=0.99\textwidth]{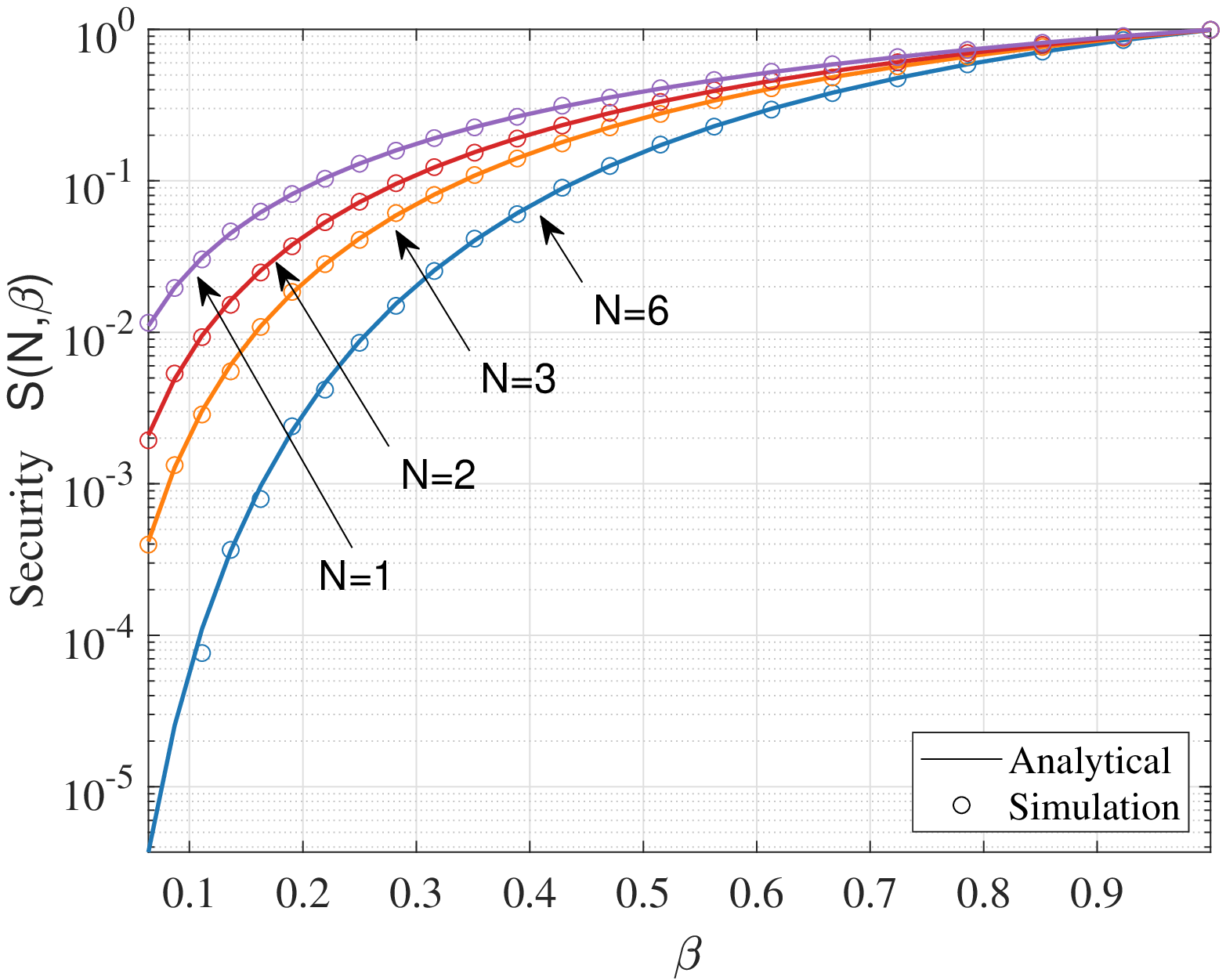}

\end{raggedright} \centering{}\caption{The analytical and experimental probability of success attack for
different $\beta$ and $N$ with $N_{g}=\infty$.\label{fig:security1}}
\end{minipage}\hfill{}%
\begin{minipage}[t]{0.47\linewidth}%
\begin{raggedright} \centering\includegraphics[width=0.99\textwidth]{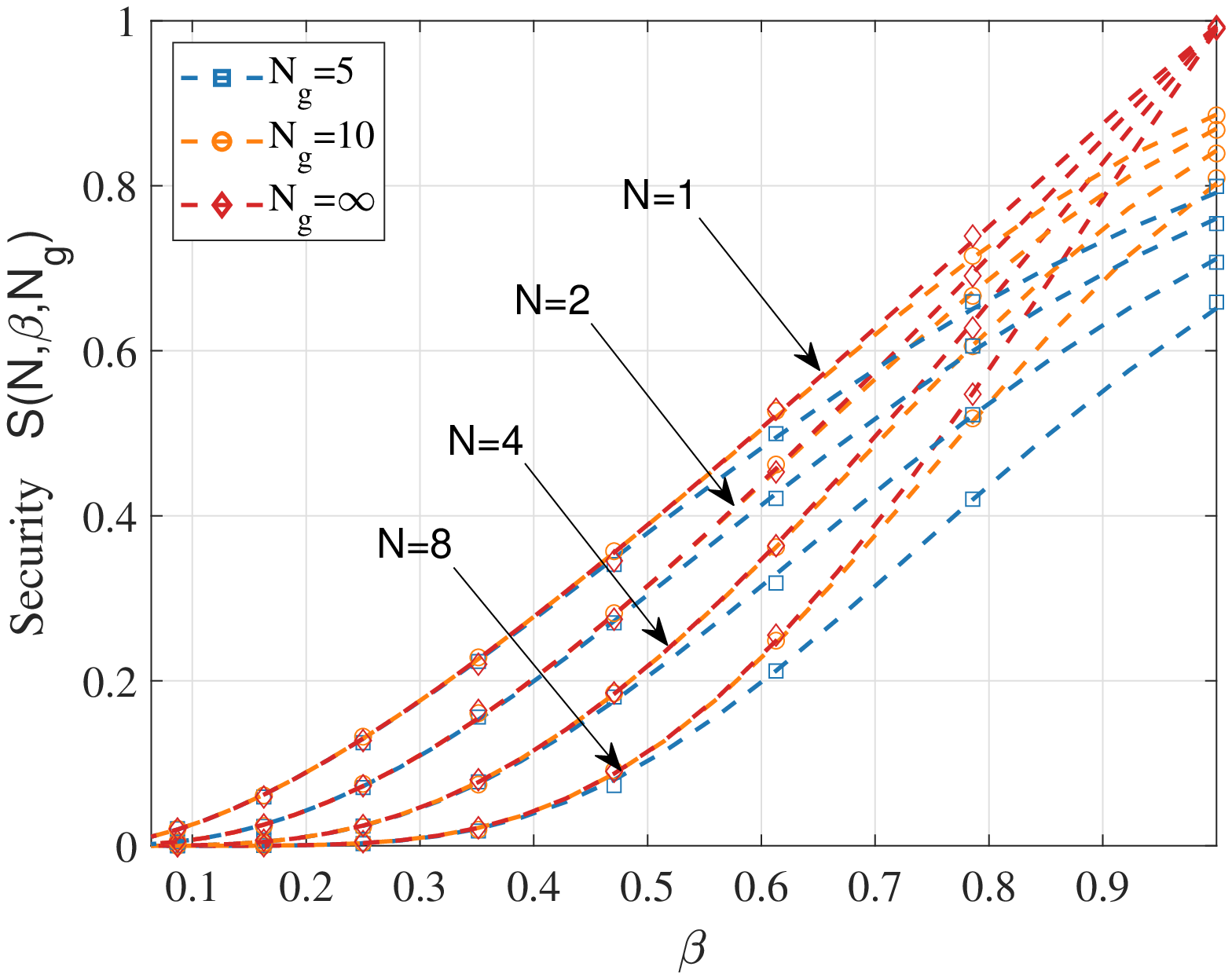}

\end{raggedright} \centering{}\caption{The impact of the attacker's strategy $N_{g}$ on the probability
of an alternative history attack.\label{fig:security2}}
\end{minipage}\vspace{-0.6cm}
\end{figure}
We would like to point out why our result differs from the existing
studies. The original Bitcoin paper\cite{Nakamoto2008} claimed that
$Y$ follows a Poisson distribution, which is actually inaccurate.
In \cite{Tschorsch2016}, authors counted the breakeven case in computing
the probability of successful attack. Authors of \cite{Tschorsch2016,Rosenfeld2014}
assumed that the attacker can pre-mine a block before mount an attack.
 Clearly, the existing works relied on less realistic assumptions.
Our results further reveal the impact of the attacker's strategy ($N_{g}$),
and therefore is more practical compared to the existing studies.

In \textbf{Prototype Verification D}, we now show the security performance
of B-RAN obtained from our home-built prototype. From Fig. \ref{fig:security1},
one can see clearly that the experiment results firmly agree with
our analytical probability of Theorem \ref{thm:probattack}. As the
relative mining rate $\beta$ increases, the probability of a successful
attack rises significantly. The chain is more secure if it is safeguarded
by more confirmations, which suggests that the number of confirmations
$N$ can effectively reduce the risk.

In Fig. \ref{fig:security2}, we illustrate that the attacker's threshold
$N_{g}$ can also affect the success probability, especially for $\beta$
close to $1$. An attacker can increase the success probability by
continued attempts, which, although, expends more mining energy. For
$\beta>1$, the 51\% attack would eventually succeed, and thus the
dominant strategy for a powerful attacker should be to use $N_{g}=\infty$.
According to Fig. \ref{fig:security2}, $\mathsf{S}\left(N,\beta\right)=\mathsf{S}\left(N,\beta,N_{g}=\infty\right)$
serves as an upper bound of the success probability, which corroborates
Corollary \ref{cor:infinity}. In practice, we can use the upper bound
$\mathsf{S}\left(N,\beta\right)$ to estimate the security level of
B-RAN.

\section{Latency-Security Trade-off}

From the above analysis, we discover that there exists an inherent
relation between latency and security. This latency-security relationship
and design trade-off capture a more complete picture on achievable
performance of B-RAN. We formally state this design trade-off as the
following theorem.
\begin{thm}
\label{thm:trade-off}The trade-off between latency and security is
given by the piecewise-linear function connecting the points $\left(\mathsf{L}\left(N,\Phi\right),\mathsf{S}\left(N,\beta\right)\right)$
for $N=1,2,...$, where $\mathsf{L}\left(N,\Phi\right)$ and $\mathsf{S}\left(N,\beta\right)$
is given by \eqref{latency:expression} and \eqref{Security:probinfinity},
respectively.
\end{thm}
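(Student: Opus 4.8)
The plan is to construct the latency-security trade-off curve as a parametric family indexed by the confirmation number $N$, treating $N$ as the single tunable design parameter that links the two performance metrics. The key observation is that both latency $\mathsf{L}\left(N,\Phi\right)$ and the maximum attack success probability $\mathsf{S}\left(N,\beta\right)$ are already fully determined as functions of $N$ by the earlier results \eqref{latency:expression} and \eqref{Security:probinfinity}, so for each integer $N=1,2,\ldots$ we obtain a single operating point $\left(\mathsf{L}\left(N,\Phi\right),\mathsf{S}\left(N,\beta\right)\right)$ in the latency-security plane. The trade-off statement is essentially the assertion that these discrete points, once joined by line segments, constitute the achievable boundary of the system.

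First I would fix the configuration $\Phi$ and the attacker strength $\beta$ and establish the monotonic behavior of each metric in $N$. From \eqref{latency:expression}, since each additional confirmation contributes an extra $T^{b}$ of expected waiting time, latency $\mathsf{L}\left(N,\Phi\right)$ is strictly increasing in $N$ (growing linearly for large $N$, as already argued in Section V-A). From \eqref{Security:probinfinity}, I would verify that $\mathsf{S}\left(N,\beta\right)$ is strictly decreasing in $N$: more confirmations make the fraudulent fork less likely to overtake the benign one, a point the text already asserts when discussing security. Together these two monotonicities show that as $N$ ranges over the positive integers, the points move rightward (higher latency) and downward (lower attack probability), tracing out a monotone decreasing relationship between latency and security.

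Next I would formalize the piecewise-linear connection. Between two consecutive admissible integers $N$ and $N+1$, no integer operating point exists, but a designer can still achieve intermediate performance by time-sharing (randomizing) between the $N$-confirmation and $(N+1)$-confirmation policies: applying the $N$-confirmation rule to a fraction $\theta$ of requests and the $(N+1)$-confirmation rule to the remaining $1-\theta$ yields expected latency $\theta\mathsf{L}\left(N,\Phi\right)+(1-\theta)\mathsf{L}\left(N+1,\Phi\right)$ and average attack probability $\theta\mathsf{S}\left(N,\beta\right)+(1-\theta)\mathsf{S}\left(N+1,\beta\right)$. As $\theta$ sweeps $[0,1]$, this traces exactly the line segment joining the two adjacent operating points, which justifies why the achievable trade-off is the piecewise-linear interpolant rather than merely the discrete point set.

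The main obstacle I anticipate is not the monotonicity arithmetic but rigorously arguing that the piecewise-linear curve is the boundary of the achievable region and not just one achievable locus, i.e., that no policy can simultaneously do strictly better in both latency and security. Since $N$ is the only degree of freedom coupling the assembling/confirmation delay to the security guarantee in this model, the cleanest argument is to note that fixing the target security level $\mathsf{S}$ forces a minimal (possibly fractional, via time-sharing) confirmation requirement, which in turn lower-bounds the incurred latency through the increasing map $\mathsf{L}\left(\cdot,\Phi\right)$; conversely fixing a latency budget caps the affordable confirmations and hence lower-bounds the attack probability. I would therefore present the theorem by combining the two monotonicity claims with the convex-combination (time-sharing) construction, and remark that the linearity of the interpolation follows directly because both coordinates are affine in the mixing weight $\theta$.
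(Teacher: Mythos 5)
The paper in fact offers no proof of Theorem \ref{thm:trade-off}: it is stated as an immediate consequence of the two earlier results, since \eqref{latency:expression} and \eqref{Security:probinfinity} already express latency and maximum attack success probability as functions of the single shared design parameter $N$, and the ``trade-off'' is simply the curve traced by the resulting points (joined linearly for visualization) as $N$ runs over $1,2,\dots$. Your first paragraph, together with your monotonicity observations ($\mathsf{L}$ increasing in $N$ via the $+T^{b}$ term in \eqref{latency:expression}, $\mathsf{S}$ decreasing in $N$), therefore already covers everything the paper itself asserts. Where you genuinely depart from the paper is in giving the line segments an operational meaning via time-sharing and in claiming the curve is the Pareto boundary of an achievable region; the paper makes no such claim, so you are attempting to prove a strictly stronger statement.

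That stronger statement is where the gaps lie. First, the time-sharing step is not exact for the latency coordinate: if a fraction $\theta$ of requests require $N$ confirmations and the rest $N+1$, the batch arrival process into the waiting-for-service stage is no longer that of either pure policy --- batches at consecutive block epochs now draw on overlapping inter-block intervals and become dependent --- so the stage-three waiting time differs from the pure cases and the mixture's mean latency is not exactly $\theta\mathsf{L}\left(N,\Phi\right)+(1-\theta)\mathsf{L}\left(N+1,\Phi\right)$. Indeed, the paper's own additive decomposition $\mathsf{L}\left(N,\Phi\right)=\mathsf{L}\left(1,\Phi\right)+(N-1)T^{b}$ relies on every request sharing the same $N$. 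Second, for the security coordinate the convex combination is an average over requests and presumes the attacker cannot steer its own transaction into the weakly protected class; a strategic adversary attacking its own request obtains $\mathsf{S}\left(N,\beta\right)$, the larger value, not the average. Third, your boundary argument needs, and never establishes, convexity of $N\mapsto\mathsf{S}\left(N,\beta\right)$: since $\mathsf{L}$ is affine in $N$, the piecewise-linear interpolant through consecutive points is the lower envelope of the mixture-achievable set only if $\mathsf{S}$ is convex in $N$; otherwise time-sharing between non-adjacent values of $N$ would lie strictly below some of your segments. None of these issues affects the paper's weaker reading of the theorem, but they prevent your achievability-plus-converse program from being a complete proof as written.
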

Given the network setting $\Phi$ and the attacker's relative hash
power $\beta$, the performance of B-RAN can be thoroughly evaluated
by the trade-off specified Theorem \ref{thm:trade-off}. The latency-security
trade-off can be used as a new performance metric to characterize
B-RAN. Theorem \ref{thm:trade-off} indicates that, more confirmations
$N$ can effectively safeguard the security of B-RAN $\mathsf{S}\left(N,\beta\right)$,
but would unfortunately increase the access delay $\mathsf{L}\left(N,\Phi\right)$.
Conversely, fewer confirmations to verify a request can considerably
reduce latency while suffering higher risk from an adversary.

\begin{figure}
\begin{raggedright} \centering\subfigure[]{\includegraphics[width=0.45\textwidth]{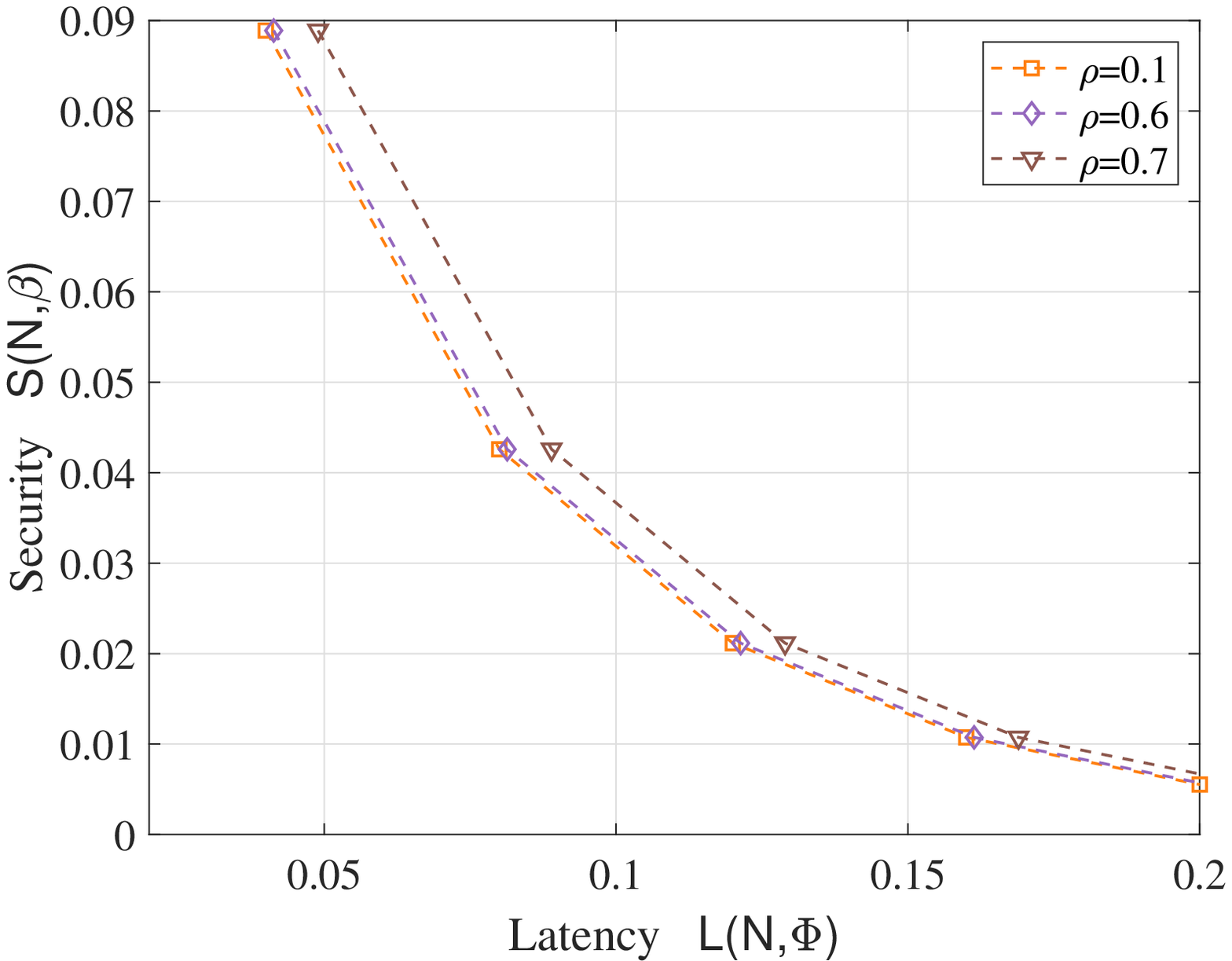}}\hfill{}
\subfigure[]{\includegraphics[width=0.45\textwidth]{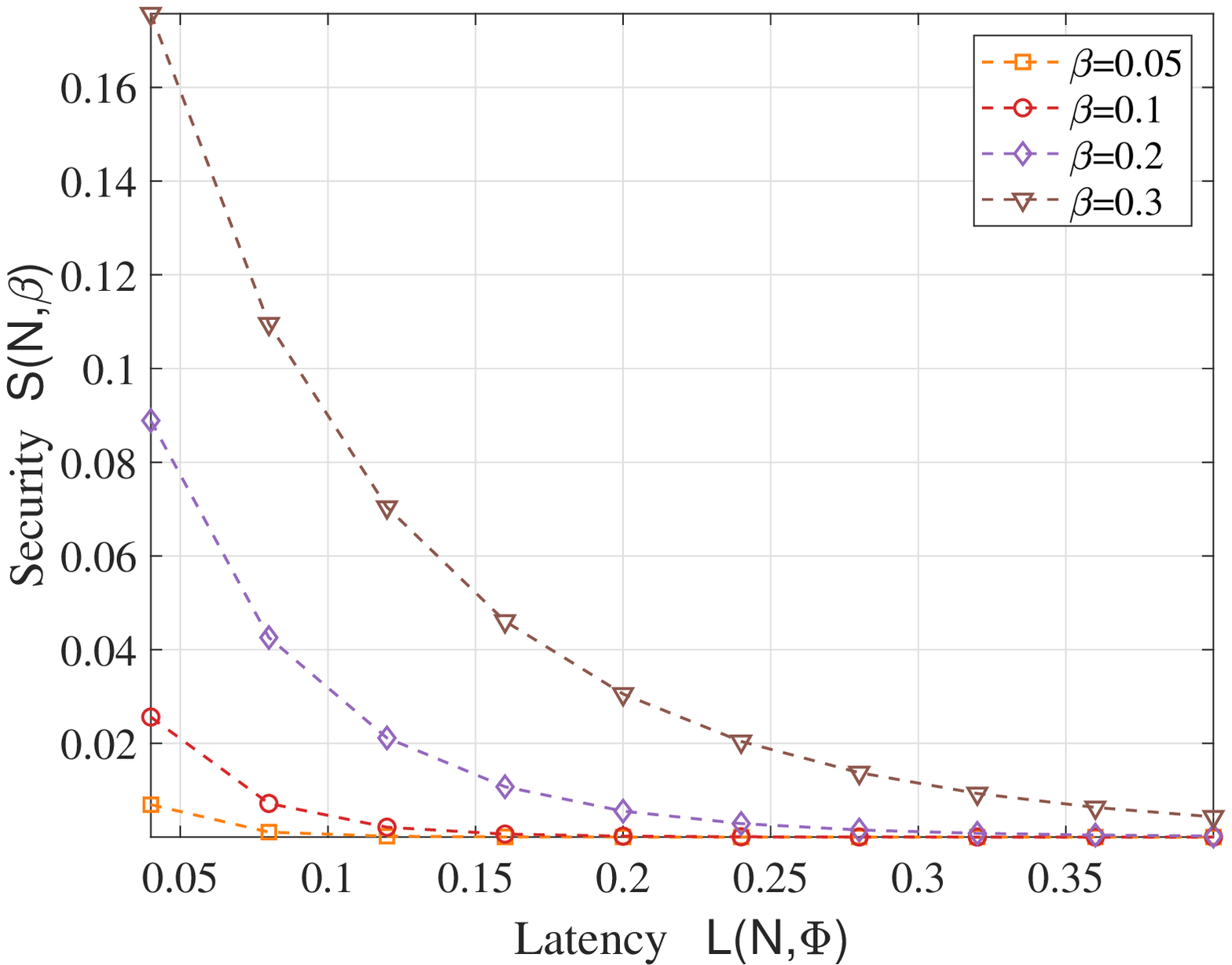}}

\end{raggedright} \centering{}\caption{The latency-security trade-off ($s=25$ and $T^{b}=0.04$). (a) Under
different traffic intensity $\rho$ and $\beta=0.2$. (b) An attacker
with different mining rate $\beta$ and $\rho=0.1$. \label{fig:tradeoffcurve}}
\vspace{-0.6cm}
\end{figure}

Fig. $\ref{fig:tradeoffcurve}$ graphically illustrates the trade-off
curve $\left(\mathsf{L}\left(N,\Phi\right),\mathsf{S}\left(N,\beta\right)\right)$.
Essentially, the latency-security trade-off is between the convergence
rate and confirmation error probability of a distributed system. Enhancing
the security advantage comes at a price of longer access delay, whereas
latency reduction comes at a price of security. Theorem \ref{thm:trade-off}
provides the achievable region of B-RAN for given network parameters
$\Phi$ and the attacker's power $\beta$. The trade-off provides
a more in-depth and complete view of B-RAN than just staring at latency
or security.

Theorem \ref{thm:trade-off} also makes another informative statement.
If the network condition $\Phi$ becomes better, e.g., with less traffic
load $\lambda^{a}$, then the entire trade-off curve will be shifted
to the left along the axis of latency for the same security level.
If the attacker becomes more powerful, e.g., a larger $\beta$, then
the trade-off curve drops in the term of security at the same access
delay. Nevertheless, the latency bounds in Section V-B is still application,
and can provide an estimation for the achievable performance of B-RAN.

Furthermore, Theorem \ref{thm:trade-off} implies that latency and
security can be balanced by adjusting the number of confirmations
$N$. The confirmation number $N$ is a positive integer that can
be chosen to achieve the shortest latency under an acceptable security
condition. Different from cryptocurrencies, wireless access is much
more sensitive to latency. Therefore, we usually should set a relative
smaller $N$ instead of the typical six confirmations suggested in
Bitcoin\cite{Nakamoto2008}. The specific value of $N$ should be
determined by the latency requirement and also the security level.

\begin{figure}
\begin{raggedright} \centering\includegraphics[width=0.5\textwidth]{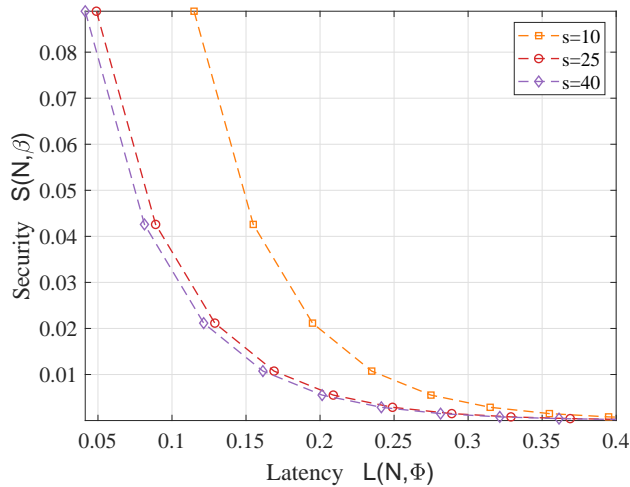}\end{raggedright}
\centering{}\caption{The impact of the network size on the trade-off curves. ($T^{b}=0.04$,
$\rho=0.7$ and $\beta=0.2$). \label{fig:tradeoffcurve-s}}

\vspace{-0.6cm}
\end{figure}

It is worth noting that the network size can influence the trade-off.
Fig. \ref{fig:tradeoffcurve-s} demonstrates that, a more expansive
network with more access links $s$ can significantly improve the
network performance. Given the security level $\beta$ and traffic
intensity $\rho$, B-RAN with a larger $s$ has shorter service latency
and allows more confirmations. Fig. \ref{fig:tradeoffcurve-s} indicates
that, as B-RAN unites more subnetworks, it means more access links
$s$ and higher value for B-RAN. This effect corresponds to the well-known
positive network effect. As more subnetworks join B-RAN, they make
the B-RAN more valuable and a longer bandwagon.

\section{Conclusions}

In this study, we established an original framework to analytically
characterize B-RAN properties and performances. Based on the established
queuing model, we analyzed the system latency of B-RAN and evaluated
the security level of B-RAN by considering the risk of alternative
history attacks. We uncovered an inherent latency and security relationship.
We proposed to assess the achievable performance of B-RAN according
to the latency-security trade-off curve. We validated our analytical
works by building an in-house B-RAN prototype. The results derived
from the model will provide meaningful insights and guidelines for
future B-RAN designs and implementations.

\appendix{}

\subsection{Proof of Theorem \ref{thm:probattack}}
\begin{proof}
Assume that the mining rates and the difficulty remain constant. The
probability of extending honest chain by one block is $\frac{1}{1+\beta}$,
and the probability for an attacker to find the next block is $\frac{\beta}{1+\beta}$.
Hence, the mining process can be described as a sequence of independent
Bernoulli trials with the probability of success $\frac{1}{1+\beta}$.
The attacker has to deliberately wait for $N$ confirmations, i.e.,
to let the benign chain continue to grow by $N$ additional blocks.
Meanwhile, the attacker can secretly produce $Y$ block in the fraudulent
fork. We count the number of failures until $N$ successes. As a result,
the random number of failures $Y$ obeys a negative binomial distribution
$Y\sim{\cal NB}(N,1/(1+\beta))$ with the probability mass function
\[
\text{Pr}\left\{ Y=n;N,\frac{1}{1+\beta}\right\} =\binom{n+N-1}{n}\left(\frac{1}{1+\beta}\right)^{N}\left(\frac{\beta}{1+\beta}\right)^{n}.
\]

Once $N$ blocks are found by the honest network, in a period of time
during which $Y=n$ blocks are found by the attacker. A race between
honest miners and the attacker begins.  As soon as the fraudulent
chain becomes longer than the benign chain, the attacker can publish
the fraudulent chain to alter a confirmed history. But, if the fraudulent
chain is $N_{g}$ (typically greater than $N$) blocks behind the
benign chain, the attacker would give up. Let $P_{n}=\text{Pr}\left\{ \text{Win}|z=n\right\} $
be the probability that the attacker wins before giving up, when starting
from $n$ blocks behind. Obviously, we have $P_{-1}=1$ and $P_{N_{g}}=0$.
The derivation of this recursion is simple. If the attacker finds
the next block, the fraudulent chain is $n-1$ blocks shorter than
the benign chain and the probability of a successful attack becomes
$P_{n-1}$. Similarly, if the benign miners find a block, the attacker
is $n+1$ blocks behind and the probability of a successful attack
becomes $P_{n+1}.$ Applying the condition on the outcome of the first
generated block, we have
\begin{align}
P_{n} & =\frac{1}{1+\beta}P_{n+1}+\frac{\beta}{1+\beta}P_{n-1},\quad0\le n<N_{g}.\label{eq:recursionEquation}
\end{align}
Note that \eqref{eq:recursionEquation} comes differently from the
equilibrium equations in Section IV. We further rewrite \eqref{eq:recursionEquation}
as
\[
P_{n-1}-P_{n}=\frac{1}{\beta}\left(P_{n}-P_{n+1}\right),\quad0\le n<N_{g}.
\]
For $n=N_{g}-1$, we have
\[
P_{N_{g}-2}-P_{N_{g}-1}=\frac{1}{\beta}\left(P_{N_{g}-1}-P_{N_{g}}\right)=\frac{1}{\beta}P_{N_{g}-1},
\]
which, via recursive, yields

\[
P_{N_{g}-n-1}-P_{N_{g}-n}=\frac{1}{\beta^{n}}P_{N_{g}-1},\quad0\le n<N_{g}.
\]
Hence,
\begin{align*}
P_{N_{g}-n-1} & =P_{N_{g}-1}+\sum_{m=1}^{n}\frac{1}{\beta^{m}}P_{N_{g}-1}=\begin{cases}
P_{N_{g}-1}\frac{1-1/\beta^{n+1}}{1-1/\beta}, & \text{if}\;\beta\neq1\\
P_{N_{g}-1}\left(n+1\right), & \text{if}\;\beta=1.
\end{cases}
\end{align*}
Using the boundary condition that $P_{-1}=1$ gives
\begin{equation}
P_{N_{g}-1}=\begin{cases}
\frac{1-1/\beta}{1-1/\beta^{N_{g}+1}} & \text{if}\;\beta\neq1\\
\frac{1}{N_{g}+1} & \text{if}\;\beta=1.
\end{cases}\label{eq:limit-1}
\end{equation}
We thus obtain the expression of $P_{n}$ as:
\begin{equation}
P_{n}=\begin{cases}
\frac{\beta^{n+1}-\beta^{N_{g}+1}}{1-\beta^{N_{g}+1}} & \text{if}\;\beta\neq1\text{ and }0\le n<N_{g}\\
\frac{N_{g}-n}{N_{g}+1} & \text{if}\;\beta=1\text{ and }0\le n<N_{g}\\
1 & \text{if}\;n<0\\
0 & \text{if}\;n\geq N_{g}.
\end{cases}\label{security:Pi}
\end{equation}

Now assume that the benign chain extends $N$ blocks and the fraudulent
chain extends $n$ blocks. The attacker begins the race with $N-n$
blocks behind. The probability of a successful alternative history
attack can be expressed as
\begin{align}
\mathsf{S}\left(N,\beta,N_{g}\right) & =\sum_{n=0}^{\infty}\text{Pr}\left\{ \text{\text{Win}}|z=N-n\right\} \text{Pr}\left\{ Y=n;N,\frac{1}{1+\beta}\right\} \nonumber \\
 & =\sum_{n=0}^{\infty}\binom{n+N-1}{n}\left(\frac{1}{1+\beta}\right)^{N}\left(\frac{\beta}{1+\beta}\right)^{n}P_{N-n}.\label{security:tmp}
\end{align}
We can arrange the terms in \eqref{security:tmp} by using the fact
$\sum_{n=0}^{\infty}\binom{n+N-1}{n}\left(\frac{1}{1+\beta}\right)^{N}\left(\frac{\beta}{1+\beta}\right)^{n}=1$,
and finally obtain the result \eqref{security:probfinity} in Theorem
\ref{thm:probattack}.
\end{proof}

\subsection{Proof of Corollary \ref{cor:infinity}}
\begin{proof}
The monotonicity of $N_{g}$ can be easily obtained by observing the
expression of $\mathsf{S}\left(N,\beta,N_{g}\right)$. As $N_{g}\rightarrow\infty$,
\eqref{security:Pi} in Appendix A becomes
\[
\lim_{N_{g}\rightarrow\infty}P_{n}=\begin{cases}
\beta^{n+1} & \text{\text{if }}\beta<1\text{ and }i>0\\
1 & \text{otherwise}.
\end{cases}
\]
As a direct result, we easily obtain $\mathsf{S}\left(N,\beta\right)=\lim_{N_{g}\rightarrow\infty}\mathsf{S}\left(N,\beta,N_{g}\right)$
in the limiting case from \eqref{security:probfinity}.
\end{proof}
\bibliographystyle{IEEEtran}
\bibliography{IEEEabrv,blockchain}

\begin{thebibliography}{10}
\providecommand{\url}[1]{#1}
\csname url@samestyle\endcsname
\providecommand{\newblock}{\relax}
\providecommand{\bibinfo}[2]{#2}
\providecommand{\BIBentrySTDinterwordspacing}{\spaceskip=0pt\relax}
\providecommand{\BIBentryALTinterwordstretchfactor}{4}
\providecommand{\BIBentryALTinterwordspacing}{\spaceskip=\fontdimen2\font plus
\BIBentryALTinterwordstretchfactor\fontdimen3\font minus
  \fontdimen4\font\relax}
\providecommand{\BIBforeignlanguage}[2]{{%
\expandafter\ifx\csname l@#1\endcsname\relax
\typeout{** WARNING: IEEEtran.bst: No hyphenation pattern has been}%
\typeout{** loaded for the language `#1'. Using the pattern for}%
\typeout{** the default language instead.}%
\else
\language=\csname l@#1\endcsname
\fi
#2}}
\providecommand{\BIBdecl}{\relax}
\BIBdecl

\bibitem{Cisco2019}
{Cisco Visual Networking Index}, ``Cisco visual networking index: global mobile
  data traffic forecast update, 2017--2022,'' \emph{Tech. Report}, Feb. 2019.

\bibitem{Tschorsch2016}
F.~Tschorsch and B.~Scheuermann, ``Bitcoin and beyond: A technical survey on
  decentralized digital currencies,'' \emph{IEEE Commun. Surv. Tutorials},
  vol.~18, no.~3, pp. 2084--2123, thirdquarter 2016.

\bibitem{Xie2019}
J.~{Xie}, H.~{Tang}, T.~{Huang}, F.~R. {Yu}, R.~{Xie}, J.~{Liu}, and Y.~{Liu},
  ``A survey of blockchain technology applied to smart cities: Research issues
  and challenges,'' \emph{IEEE Commun. Surv. Tutorials}, vol.~21, no.~3, pp.
  2794--2830, thirdquarter 2019.

\bibitem{Dai2019}
Y.~{Dai}, D.~{Xu}, S.~{Maharjan}, Z.~{Chen}, Q.~{He}, and Y.~{Zhang},
  ``Blockchain and deep reinforcement learning empowered intelligent {5G}
  beyond,'' \emph{IEEE Network}, vol.~33, no.~3, pp. 10--17, May 2019.

\bibitem{Ling2019}
X.~Ling, J.~Wang, T.~Bouchoucha, B.~C. Levy, and Z.~Ding, ``Blockchain radio
  access network ({B-RAN}): Towards decentralized secure radio access
  paradigm,'' \emph{IEEE Access}, vol.~7, pp. 9714--9723, Jan. 2019.

\bibitem{Christidis2016}
K.~Christidis and M.~Devetsikiotis, ``Blockchains and smart contracts for the
  {I}nternet of {T}hings,'' \emph{IEEE Access}, vol.~4, pp. 2292--2303, Jun.
  2016.

\bibitem{Novo2018}
O.~Novo, ``Blockchain meets {IoT}: an architecture for scalable access
  management in {IoT},'' \emph{IEEE Internet Things J.}, vol.~5, no.~2, pp.
  1184--1195, Apr. 2018.

\bibitem{Danzi2018}
P.~Danzi, A.~E. Kal{\o}r, {\v{C}}.~Stefanovi{\'c}, and P.~Popovski, ``Delay and
  communication tradeoffs for blockchain systems with lightweight {IoT}
  clients,'' \emph{arXiv preprint arXiv:1807.07422}, Oct. 2018.

\bibitem{Ling2020}
X.~Ling, Y.~Le, J.~Wang, and Z.~Ding, ``Hash {Access}: Trustworthy grant-free
  {IoT} access enabled by blockchain radio access network,'' \emph{IEEE
  Network}, Jan. 2020.

\bibitem{Backman2017}
J.~{Backman}, S.~Yrj{\"o}l{\"a}, K.~{Valtanen}, and O.~M{\"a}mmel{\"a},
  ``Blockchain network slice broker in 5{G}: Slice leasing in factory of the
  future use case,'' in \emph{Proc. Internet Things Bus. Models, Users, Netw.},
  Copenhagen, Danmark, Nov. 2017, pp. 1--8.

\bibitem{Xiong2019}
Z.~{Xiong}, S.~{Feng}, W.~{Wang}, D.~{Niyato}, P.~{Wang}, and Z.~{Han},
  ``Cloud/fog computing resource management and pricing for blockchain
  networks,'' \emph{IEEE Internet Things J.}, vol.~6, no.~3, pp. 4585--4600,
  Jun. 2019.

\bibitem{Yang2019}
R.~{Yang}, F.~R. {Yu}, P.~{Si}, Z.~{Yang}, and Y.~{Zhang}, ``Integrated
  blockchain and edge computing systems: A survey, some research issues and
  challenges,'' \emph{IEEE Commun. Surv. Tutorials}, pp. 1--1, Jan. 2019.

\bibitem{Liu2018}
M.~Liu, F.~R. Yu, Y.~Teng, V.~C.~M. Leung, and M.~Song, ``Computation
  offloading and content caching in wireless blockchain networks with mobile
  edge computing,'' \emph{{IEEE} Trans. Veh. Technol.}, vol.~67, no.~11, pp.
  11\,008--11\,021, Nov. 2018.

\bibitem{Yang2019a}
J.~Yang, S.~He, Y.~Xu, L.~Chen, and J.~Ren, ``A trusted routing scheme using
  blockchain and reinforcement learning for wireless sensor networks,''
  \emph{Sensors}, vol.~19, no.~4, p. 970, 2019.

\bibitem{Liu2019}
Z.~{Liu}, N.~C. {Luong}, W.~{Wang}, D.~{Niyato}, P.~{Wang}, Y.~{Liang}, and
  D.~I. {Kim}, ``A survey on blockchain: A game theoretical perspective,''
  \emph{IEEE Access}, vol.~7, pp. 47\,615--47\,643, Apr. 2019.

\bibitem{Wang2019}
W.~{Wang}, D.~T. {Hoang}, P.~{Hu}, Z.~{Xiong}, D.~{Niyato}, P.~{Wang},
  Y.~{Wen}, and D.~I. {Kim}, ``A survey on consensus mechanisms and mining
  strategy management in blockchain networks,'' \emph{IEEE Access}, vol.~7, pp.
  22\,328--22\,370, Jan. 2019.

\bibitem{Weiss2019}
M.~B.~H. {Weiss}, K.~{Werbach}, D.~C. {Sicker}, and C.~E.~C. {Bastidas}, ``On
  the application of blockchains to spectrum management,'' \emph{IEEE Trans.
  Cognit. Commun. Networking}, vol.~5, no.~2, pp. 193--205, Jun. 2019.

\bibitem{Kuo2018}
P.~Kuo, A.~Mourad, and J.~Ahn, ``Potential applicability of distributed ledger
  to wireless networking technologies,'' \emph{IEEE Wireless Commun.}, vol.~25,
  no.~4, pp. 4--6, Aug. 2018.

\bibitem{Pascale2017}
E.~D. Pascale, J.~McMenamy, I.~Macaluso, and L.~Doyle, ``Smart contract {SLAs}
  for dense small-cell-as-a-service,'' \emph{arXiv preprint arXiv:1703.04502},
  Mar. 2017.

\bibitem{Kotobi2018}
K.~Kotobi and S.~G. Bilen, ``Secure blockchains for dynamic spectrum access: A
  decentralized database in moving cognitive radio networks enhances security
  and user access,'' \emph{IEEE Veh. Technol. Mag.}, vol.~13, no.~1, pp.
  32--39, Mar. 2018.

\bibitem{Le2019}
Y.~Le, X.~Ling, J.~Wang, and Z.~Ding, ``Prototype design and test of blockchain
  radio access network,'' in \emph{2019 IEEE Int. Conf. Commun. Workshops (ICC
  Workshops)}, Shanghai, P.R. China, 2019.

\bibitem{Nakamoto2008}
S.~Nakamoto, ``Bitcoin: A peer-to-peer electronic cash system,'' \emph{Tech.
  Report}, Oct. 2008.

\bibitem{Cooper1972}
R.~B. Cooper, \emph{Introduction to queueing theory}.\hskip 1em plus 0.5em
  minus 0.4em\relax New York: Macmillan, 1972.

\bibitem{Xu2019}
Y.~{Xu}, Z.~{Xiao}, T.~{Ni}, J.~H. {Wang}, X.~{Wang}, and E.~{Altman}, ``On the
  robustness of price-anticipating kelly mechanism,'' \emph{IEEE/ACM
  Transactions on Networking}, vol.~27, no.~4, pp. 1558--1571, Aug 2019.

\bibitem{Wu2018}
Q.~{Wu}, C.~{Feres}, D.~{Kuzmenko}, D.~{Zhi}, Z.~{Yu}, X.~{Liu}, and X.~L. Liu,
  ``Deep learning based {RF} fingerprinting for device identification and
  wireless security,'' \emph{Electron. Lett.}, vol.~54, no.~24, pp. 1405--1407,
  Nov. 2018.

\bibitem{Kleinrock1975}
L.~Kleinrock, \emph{Theory, Volume 1, Queueing Systems}.\hskip 1em plus 0.5em
  minus 0.4em\relax New York, NY, USA: Wiley-Interscience, 1975, vol.~1.

\bibitem{Bai2019}
Q.~{Bai}, X.~{Zhou}, X.~{Wang}, Y.~{Xu}, X.~{Wang}, and Q.~{Kong}, ``A deep
  dive into blockchain selfish mining,'' in \emph{2019 IEEE Int. Conf. Commun.
  (ICC)}, Shanghai, P. R. China, May 2019, pp. 1--6.

\bibitem{Rosenfeld2014}
\BIBentryALTinterwordspacing
M.~Rosenfeld, ``Analysis of hashrate-based double spending,'' \emph{CoRR}, vol.
  abs/1402.2009, 2014. [Online]. Available:
  \url{http://arxiv.org/abs/1402.2009}
\BIBentrySTDinterwordspacing

\end{thebibliography}

\end{document}